\newcommand{\blind}{0}
\newcommand{\norm}[1]{\left\lVert#1\right\rVert}
\newcommand{\cov}{\text{Cov}}
\newcommand{\vm}{\text{vec}}
\begin{document}

\def\spacingset#1{\renewcommand{\baselinestretch}%
{#1}\small\normalsize} \spacingset{1}


\if0\blind
{
  \title{\bf Regularized Parameter Estimation in Mixed Model Trace Regression}
  \author{Ian Hultman \hspace{.2cm}\\
    Department of Statistics and Actuarial Science, University of Iowa\\
    and \\
    Sanvesh Srivastava \\
    Department of Statistics and Actuarial Science, University of Iowa}
  \maketitle
} \fi

\if1\blind
{
  \bigskip
  \bigskip
  \bigskip
  \begin{center}
    {\LARGE\bf Title}
\end{center}
  \medskip
} \fi

\bigskip
\begin{abstract}
We introduce mixed model trace regression (MMTR), a mixed model linear
regression extension for scalar responses and high-dimensional matrix-valued
covariates. MMTR's fixed effects component is equivalent to trace regression,
with an element-wise lasso penalty imposed on the regression coefficients matrix
to facilitate the estimation of a sparse mean parameter. MMTR's key innovation lies in modeling the covariance structure of matrix-variate random effects as a Kronecker product of low-rank row and column covariance matrices, enabling sparse estimation of the covariance parameter through low-rank constraints. We establish identifiability conditions for the estimation of row and column covariance matrices
and use them for rank selection by applying group lasso regularization on the
columns of their respective Cholesky factors. We develop an Expectation-Maximization (EM) algorithm extension for numerically stable parameter estimation in high-dimensional applications.  MMTR achieves estimation accuracy comparable to leading regularized quasi-likelihood competitors across diverse simulation studies and attains the lowest mean square prediction error compared to its competitors on a publicly available image dataset.
\end{abstract}

\noindent%
{\it Keywords:}  High-dimensional regularization; matrix normal distribution; mixed model; separable covariance; trace regression
\vfill

\newpage
\spacingset{1.75} 

\section{Introduction}
\label{sec:intro}

We introduce trace regression models that include matrix-valued random effects, referred to as \emph{mixed model trace regression} (MMTR). MMTR extends mixed model linear regression to include matrix-valued fixed and random effects covariates, but the response remains a vector of correlated observations. The parameter dimension in these models easily exceeds the number of observations, requiring specific sparsity assumptions on the model parameters. MMTR models the mean parameter  by assuming that the regression coefficients matrix is sparse. The random effects' row and column covariance matrices in MMTR are assumed to have low ranks, leading to sparse covariance parameters. We develop an EM algorithm for regularized parameter estimation in MMTR, which exploits the sparsity patterns in the parameter components for efficiency and stability in the high-dimensional regime.

Consider the MMTR setup. Let $n$ be the number of subjects or clusters, $m_i$ be the number of observations specific to cluster $i$ ($i=1, \ldots, n$), and $N = \sum_{i=1}^n m_i$ be the total number of observations. The response vector for cluster $i$ is $\yb_i \in \RR^{m_i}$, with $j$th element $y_{ij} \in \RR$. The fixed and random effects covariates matrices specific to $y_{ij}$ are $\Xb_{ij} \in \RR^{P_1 \times P_2}$ and $\Zb_{ij} \in \RR^{Q_1 \times Q_2}$ ($i=1, \ldots, n$; $j=1, \ldots, m_i$). Then, MMTR sets
\begin{align}
  \label{eq:tensr-mdl}
  y_{ij} = \tr( \Xb_{ij}^\top \Bb) +  \tr(\Zb_{ij}^\top \Ab_i)  + e_{ij},
\quad  \Ab_i \sim N_{Q_1, Q_2}(\zero, \tau^2 \Sigmab_1,  \Sigmab_2), \quad \eb_i \sim N_{m_i}(\zero, \tau^2 \Ib_{m_i}),
\end{align}
where $\tr$ is the \emph{trace} operator, $\Bb \in \RR^{P_1 \times P_2}$ is the fixed effects parameter matrix, $\Ab_i \in \RR^{Q_1 \times Q_2}$ is the random effects matrix, $N_{Q_1, Q_2}(\zero, \tau^2 \Sigmab_1, \Sigmab_2)$ is the matrix variate Gaussian distribution with row and column covariance matrices $\tau^2\Sigmab_1$ and $\Sigmab_2$, respectively, $\eb_i = (e_{i1}, \ldots, e_{im_i})^\top$ is the $i$th idiosyncratic error vector, $\Ib_{m_i}$ is an $m_i \times m_i$ identity matrix, and $\Ab_i, \eb_i$ are mutually independent for every $i$. The model parameters are $\Bb, \Sigmab_1, \Sigmab_2,$ and $ \tau^2$.

Mixed model linear and trace regression models are special cases of MMTR in \eqref{eq:tensr-mdl}. MMTR reduces to mixed model linear regression if $P_2 = Q_2 = 1$. This model is widely used for analyzing repeated measures and clustered data, but its application is limited to vector-valued covariates \citep{verbeke1997linear}. If $\Ab_i = \zero$ and $m_{i}=1$ for every $i$ in \eqref{eq:tensr-mdl}, then MMTR is equivalent to the trace regression model \citep{Fanetal19}. This model accommodates matrix-valued covariates but fails to model the correlation in the responses. MMTR integrates the advantages of these two frameworks and enables realistic models for repeated measures and clustered data with matrix covariates and scalar responses.

\subsection{Prior Literature}
%

The vectorization of covariates reduces MMTR to a ``structured'' mixed model linear regression. If $P=P_{1}P_{2}$ and $Q=Q_{1}Q_{2}$, then  \eqref{eq:tensr-mdl} is equivalent to a mixed model with fixed and random effects covariates $\vm(\Xb_{{ij}}) \in \RR^{P}$ and $\vm(\Zb_{ij}) \in \RR^{Q}$, $P$-dimensional regression coefficient $\vm(\Bb)$, and $Q$-dimensional random effects $\vm(\Ab_{i})$ with a $Q \times Q$ covariance matrix $\tau^{2} \Sigmab_{2} \otimes \Sigmab_{1}$, where $\vm$ is the column-wise vectorization of a matrix and  $\otimes$ is the Kronecker product. Fitting these structured mixed models using classical algorithms, such as those in \texttt{lme4} \citep{Bat15}, has poor empirical performance for two main reasons. First, the estimation algorithms do not exploit the sparse and the separable matrix-variate structures of $\Bb$ and $\tau^{2} \Sigmab_{2} \otimes \Sigmab_{1}$, respectively. 
Second, they are prohibitively slow because the parameter dimension grows rapidly as $O(P + Q^2)$.

High-dimensional mixed model extensions address these issues through penalization but fail to exploit the separable covariance structures. The main idea of these methods is to define a quasi-likelihood by replacing the covariance matrix with a proxy matrix \citep{FanLi12,Huietal17,Braetal20,Lietal22}. A sparse estimate of $\Bb$ is obtained using lasso-type penalties, which are optimal under various high-dimensional asymptotic regimes. The estimate of the random effects covariance matrix is  only available when $Q < \min_{i} m_{i}$, a condition often violated in practice \citep{Lietal22}. Furthermore, even when the estimate exists, it does not have the separable structure of the true covariance matrix.

The literature on separable covariance estimation addresses such problems. Consider a modification of \eqref{eq:tensr-mdl} with $\Bb = \zero$ and $ \tau^{2}=1$ that sets $\Yb_{i} = \Ab_{i} \in \RR^{Q_{1} \times Q_{2}}$ for $i=1, \ldots, n$. In this model, $\cov \{\text{vec}(\Ab_{i})\}$ has the separable form $\Sigmab_{2} \otimes \Sigmab_{1}$, where $\Sigmab_{1}$ and $\Sigmab_{2}$ are defined in \eqref{eq:tensr-mdl}. In the low dimensional setting, the two parameters are estimated using a ``flip-flop'' algorithm that estimates $\Sigmab_{2}$ given $\Sigmab_{1}$ and vice versa \citep{Dut99,LuZim05,Srietal08}. \citet{Hof11} develops tensor-variate extensions of these algorithms; however, all these approaches assume that $Q \ll n$. Extending methods from the literature on low-rank covariance estimation, \citet{Zhaetal23} estimate $\cov \{\text{vec}(\Ab_{i})\}$ via regularized banded estimates of $\Sigmab_{1}$ and $\Sigmab_{2}$. Unlike the setup in these models, the random effects $\Ab_{i}$'s are unobserved in \eqref{eq:tensr-mdl}, making it impossible to apply these algorithms directly for parameter estimation in MMTR.

Tensor regression directly handles matrix-variate covariates but assumes $\Ab_i = \zero$ and $m_i = 1$ for every $i$ in \eqref{eq:tensr-mdl}. The main focus is on estimating $\Bb$ under sparsity inducing penalties. If the correlation in $\yb_i$ induced by $\Ab_i$ in \eqref{eq:tensr-mdl} is ignored, then existing trace regression algorithms estimate low-rank, row-sparse, or column-sparse estimates of $\Bb$ in \eqref{eq:tensr-mdl}  using nuclear norm and group lasso penalties \citep{zhao2017,slawski2015regularization,Fanetal19}. While such estimates remain under-studied, similar estimates, which assume the independence of $\yb_i$ entries, in mixed model linear regression have poor inferential and predictive performance \citep{Huietal21,Lietal22}. We conjecture that such results extend to the MMTR due to the equivalence between the two model classes.

MMTR belongs to the class of tensor mixed models, where the responses and covariates are structured as tensors. The simplest models in this class have no random effects and estimate the regression coefficients under low rank and sparse constraints \citep{zhou2013,zhou2014regularized}. There is limited literature on tensor mixed models that jointly estimate mean and covariance parameters. \citet{Yueetal20} develop one such model, but it excludes random effects covariates. The random effects tensor has a separable structure in this model and the estimation algorithm is a variant of the flip-flop algorithm; however, their approach is only applicable when $n \ll Q$ and $\Zb_{ij} = \Ib_{Q_{1}}$ for every $i$ and $j$, implying that this method cannot be used for parameter estimation in MMTR.

MMTR is closely related to tensor mixed models based on generalized estimating
equations (GEE), which are tuned for modeling longitudinal imaging data
\citep{Zhaetal19}. The GEE-based model has scalar responses and tensor
covariates. For two dimensional tensors, the GEE-based models and MMTR have the same mean parametrization; however, the GEE-based model assumes that the regression coefficients have a low tensor rank, whereas MMTR assumes they are sparse. The random effects terms are absent in the GEE-based model, which replaces them with sample-specific marginal ``working'' covariance matrices. The GEE-based model requires selection of the rank, working covariance matrix form, and penalty tuning parameters for parameter estimation using a minorization maximization (MM) algorithm. In contrast, MMTR's parameter estimation algorithm only requires two tuning parameters that determine the sparsity of $\Bb$ and ranks of $\Sigmab_1$ and $\Sigmab_2$.

\subsection{Our Contributions}
\label{sec:our-cont}

MMTR employs a regularized EM for estimating the parameters in \eqref{eq:tensr-mdl}. Let $\Lb_1$ and $\Lb_2$ be two matrices such that $\Sigmab_1 = \Lb_1 \Lb_1^\top$ and $\Sigmab_2 = \Lb_2 \Lb_2^\top$. Then, the parameters in \eqref{eq:tensr-mdl} are $\thetab = \{\Bb, \Lb_1, \Lb_2, \tau^2\}$. The E step treats the random effects $\Ab_1, \ldots, \Ab_n$ as missing data and replaces the log-likelihood with a minorizer, which is analytically tractable due to the set up in \eqref{eq:tensr-mdl}. The M step maximizes the E step minorizer through a series of closed-form conditional maximizations; however, this approach requires $P + Q_1^{2} + Q_2^{2} \ll n$ for numerical stability. For vector-valued covariates, this algorithm reduces to the EM algorithm in \citet{Dyk00} for parameter estimation in mixed model linear regression.

In the high-dimensional settings where $n \ll P + Q_1^2 + Q_2^2$, the previous EM algorithm requires regularization. We assume that $\Bb$ is sparse and that $\Sigmab_1$ and $\Sigmab_2$ are approximately low rank; that is, $\Sigmab_1 \approx \Lb_1 \Lb_1^\top$ and $\Sigmab_2 \approx \Lb_2 \Lb_2^\top$, where $\Lb_k \in \RR^{Q_k \times  S_k}$, and $S_k \ll Q_k$ for  $k=1, 2$, reducing the parameter dimension from $O(P_1 P_2 + Q_1^2 + Q_2^2)$ to $O(P_1 P_2 + Q_1S_1+ Q_2S_2)$. The $\Sigmab_k$ estimates depend on $S_k$ choice, so we use group lasso penalties on the columns of $\Lb_k$ for a data-driven choice of $S_k$. The low-rank factorization of $\Sigmab_k$ resembles a factor-analytic structure, so we set $S_k = O(\log Q_k)$ initially \citep{RocGeo16,Sri17}. This idea extends the low-rank random effects in mixed model linear regression to low-rank matrix-variate random effects in MMTR \citep{james2000principal,Hei24}. This setup implies that the maximization of the E step objective reduces to a series of regularized least squares problems, enabling estimation of $\Bb, \tau^2, \Lb_1$ and $\Lb_2$ via \texttt{scalreg} \citep{scalreg_lib} and \texttt{gglasso} \citep{gglasso_lib} \emph{R} packages; see Section \ref{sec:em-algo} for the details.

Our main contributions are threefold. First, MMTR is a novel extension of mixed
models for matrix-valued covariates. We establish conditions for the separate
estimation of $\Sigmab_1$ and $\Sigmab_2$, given that only $\Sigmab_2 \otimes
\Sigmab_1$  is identified in \eqref{eq:tensr-mdl}; see Section
\ref{sec:matrix-lme-mdl}. Second, random effects $\Ab_i$'s in \eqref{eq:tensr-mdl}
have a low-rank separable covariance structure, which is a natural extension of
separable covariance arrays to high-dimensional settings \citep{Hof11}. MMTR is
also related to separable factor analysis (SFA) in that SFA adds diagonal
matrices with positive entries to $\Sigmab_1$ and $\Sigmab_2$, implying that
MMTR's covariance structure is more parsimonious than SFA when $\min(P_{1},
P_{2})$ is large  \citep{FosHof14}; see Section \ref{sec:mmtr_notation}.
Finally, we develop an EM algorithm for efficient parameter regularized
estimation and data-driven choice of the ranks of $\Sigmab_1$ and $\Sigmab_2$; see
Section \ref{sec:em-algo}. MMTR achieves lower estimation and prediction errors than its high-dimensional
mixed model competitors across diverse simulated and real data analyses.

\section{Mixed Model Trace Regression}
\label{sec:matrix-lme}

\subsection{Model}
\label{sec:matrix-lme-mdl}
Consider the parameter expanded form of \eqref{eq:tensr-mdl}. Let $\Lb_k \in \RR^{Q_k \times S_k}$ be the square root matrix factor of $\Sigmab_k$ for $k=1, 2$. If $\Lb_k$ has full rank for $k = 1, 2$, then $Q_k = S_k$ and $\Sigmab_k = \Lb_k \Lb_k^\top$. MMTR imposes a low-rank structure on $\Sigmab_1$  and $\Sigmab_2$ by assuming that $S_1 \ll Q_1$ and $S_2 \ll Q_2$, so that $\Lb_1 \Lb_1^\top \approx \Sigmab_1$ and $\Lb_2 \Lb_2^\top \approx \Sigmab_2$. The random effects are expressed using $\Lb_1$ and $\Lb_2$ as
\begin{align}
  \label{eq:rnd-eff}
  \Ab_i = \Lb_1 \Cb_i \Lb_2^\top, \quad  \Cb_i \sim N_{S_1, S_2}(\zero, \tau^2 \Ib_{S_1},  \Ib_{S_2}),
\end{align}
where $\Cb_i$ is a $S_1 \times S_2$ matrix of independent normal random variables with mean 0 and variance $\tau^2$. Based on these assumptions, the parameter expanded form of MMTR in \eqref{eq:tensr-mdl} is
\begin{align}
  \label{eq:px-tensr-mdl}
  y_{ij} = \tr( \Xb_{ij}^\top \Bb) +  \tr(\Zb_{ij}^\top \Lb_1 \Cb_i \Lb_2^\top)  + e_{ij},
\quad e_{ij} \sim N(0, \tau^2).
\end{align}
The model parameters are $\thetab = (\Bb, \Lb_1, \Lb_2, \tau^2)$, and $\Lb_k$ for $k=1, 2$ is non-identified because replacing $\Lb_k$ with $\Lb_k \Ob_k$ for any orthonormal matrix $\Ob_k$ leaves the decomposition of $\Sigmab_k$ in \eqref{eq:rnd-eff} unchanged. We do not impose  identifiability conditions on $\Lb_1$ and $\Lb_2$ because they do not impact the estimates of $\Sigmab_{1}$ and $\Sigmab_{2}$.

MMTR in \eqref{eq:px-tensr-mdl} reduces to the trace regression model when $m_i=1$ and $\Ab_i=\zero$ for every $i=1, \ldots, n$ \citep{slawski2015regularization,zhao2017,Fanetal19}.  These methods estimate $\Bb$ under different regularization schemes using the lasso-type and nuclear norm penalties. For example, $\| \Bb \|_1 = \sum_{ij} | b_{ij} |$, $\| \Bb \|_{1,2} = \sum_{i=1}^{P_1} \| \bb_{i:} \|_2 $, and $\| \Bb \|_{2, 1} = \sum_{j=1}^{P_2} \| \bb_{:j} \|_2$, where $b_{ij}$, $\bb_{i:}$, and $\bb_{:j}$ are the $(i,j)$th entry, $i$th row, and $j$th column of $\Bb$. The application of these three penalties results in sparse, row sparse, and column sparse $\Bb$ estimates, which are efficiently estimated using the alternating direction method of multipliers or projected gradient methods. If $\Lb_1, \Lb_2, \tau^2$ are known, then we can use these algorithms for sparse estimation of $\Bb$. MMTR extensions with $\| \Bb \|_{1,2} $ and $\| \Bb \|_{2, 1}$ penalties can leverage these estimation algorithms; however, we employ only the $\| \Bb \|_1$ penalty for simplicity, enabling the direct application of existing algorithms designed for scaled sparse regression \citep{sun2012scaled}.

The covariance matrices $\Sigmab_1$ and $ \Sigmab_2$ in \eqref{eq:tensr-mdl} are identified up to a constant. For $i = 1, \ldots, n$, $\Ab_i$ in \eqref{eq:tensr-mdl} has a separable covariance array; that is, $\cov \{\text{vec}(\Ab_i)\} = \tau^2 \Sigmab_2 \otimes \Sigmab_1$ \citep{Dut99,LuZim05,Hof11}. The covariance array remains unchanged if $\Sigmab_1$ and $ \Sigmab_2$ are modified to $c' \Sigmab_1$ and $\Sigmab_2 / c'$ for any $c' > 0$. If $\sigma^2$ is the positive ``overall variance'' parameter, then $\Sigmab_1$ and $ \Sigmab_2$ are identified by assuming that $\det(\Sigmab_1)= \det(\Sigmab_2) = 1$ and $\cov\{\text{vec}(\Ab_i)\} = \tau^2 \sigma^2 \Sigmab_2 \otimes \Sigmab_1$ \citep{GerHof15}. This condition is inapplicable for identifying  $\Sigmab_1$ and $ \Sigmab_2$ in \eqref{eq:tensr-mdl} because they are low-rank matrices.

We enforce the identifiability of $\Sigmab_1$ and $\Sigmab_2$ based on $\Lb_2$. Let $(\Mb)_{ij}$ denote the $(i,j)$th element of some matrix $\Mb$. A popular identifiability condition assumes that $(\Sigmab_2)_{11} = 1$, implying that the first $Q_1 \times Q_1$ block of $\Sigmab_2 \otimes \Sigmab_1$ equals $\Sigmab_1$ \citep{Dut99}. We can extend this condition to the case when $(\Sigmab_2)_{jj} = 1$, which results in the $j$th diagonal $Q_1 \times Q_1$ block of $\Sigmab_2 \otimes \Sigmab_1$ equaling $\Sigmab_1$. This condition is useful when $\Sigmab_1$ and $\Sigmab_2$ have full ranks and are estimated directly; however, MMTR estimates $\Sigmab_1$ and $\Sigmab_2$ via $\Lb_1$ and $\Lb_2$, so this condition requires modification. The following proposition restates this identifiability condition using $\Lb_{2}$, which is suited for high-dimensional MMTR applications.
\begin{proposition}\label{prop1}
  Assume that $(\Lb_2)_{j1} = 1$ and $(\Lb_2)_{j2} = \cdots = (\Lb_2)_{jS_2} = 0$. Then, $(\Sigmab_2)_{jj} = 1$ and the $j$th diagonal $Q_1 \times Q_1$ block of $\Sigmab_2 \otimes \Sigmab_1$ is $\Sigmab_1$.
\end{proposition}
The proof of this proposition is given in the supplementary material along with other proofs. Proposition \ref{prop1} does not impose any assumption on the ranks of $\Sigmab_1$ and $ \Sigmab_2$, so it is applicable when $\Sigmab_1$ or $\Sigmab_2$ are rank-deficient. Furthermore, the following proposition shows that we can convert any square-root covariance matrix $\Lb_2$ into the form described in Proposition \ref{prop1}
\begin{proposition}\label{prop2}
  For any matrix $\Lb \in \RR^{Q \times S}$, there exists an orthonormal matrix $\Qb_j \in \RR^{S \times S}$ and constant $c$ such that $c \Lb \Qb_j^\top$ is a matrix whose $j$th row is the first standard basis vector in $\RR^S$.
\end{proposition}

Proposition \ref{prop2} states that $\Sigmab_1$ and $\Sigmab_2$ are identified if $\Sigmab_2$ is scaled so that any one of its diagonal values equals 1. We use Proposition \ref{prop2} to evaluate the estimation accuracy of our estimation algorithm.
We compare our estimates $\hat{\Sigmab}_1$ and $\hat{\Sigmab}_2$ of the true parameters $\Sigmab_1$ and $\Sigmab_2$ in the simulations by scaling $\hat{\Sigmab}_1$, $\hat{\Sigmab}_2$, $\Sigmab_1$, and $\Sigmab_2$ so that their respective largest diagonal entries are 1. We also use Proposition \ref{prop2} to scale the $ {\Sigmab}_1$ and $ {\Sigmab}_2$ estimates at the end of each iteration to further improve the stability of the estimation algorithm.

\subsection{Reduction to a Structured Mixed Model}
\label{sec:mmtr_notation}

The reduction of MMTR to a structured mixed model linear regression requires the following definitions. Let $\bb = \text{vec}(\Bb)$, $\xb_{ij} = \text{vec}(\Xb_{ij})$, $\Zb_{ij(1)} = \Zb_{ij}$, $\Zb_{ij(2)} = \Zb_{ij}^\top$, $\zb_{ij(1)} = \text{vec}(\Zb_{ij(1)})$, $\zb_{ij(2)} = \text{vec}(\Zb_{ij(2)})$, $\ab_{i(1)} = \text{vec}(\Ab_i)$, $\ab_{i(2)} = \text{vec}(\Ab_i^\top)$, $\cb_{i(1)} = \text{vec}(\Cb_i)$, $\cb_{i(2)} = \text{vec}(\Cb_i^\top)$, $\lb_1 = \text{vec}(\Lb_1)$, $\lb_2 = \text{vec}(\Lb_2)$. Then, identities relating $\tr$, vec, and $\otimes$ operators imply that  $\Ab_i = \Lb_1 \Cb_i \Lb_2^\top$ is equivalent to $\ab_{i(1)} = \Lb_{(1)} \cb_{(1)}$ and $\ab_{i(2)} = \Lb_{(2)} \cb_{i(2)}$, where $\Lb_{(1)} = \Lb_2 \otimes \Lb_1$ and $\Lb_{(2)} = \Lb_1 \otimes \Lb_2$; see \citet{seber_kronecker}. The vectors $\ab_{i(1)}$'s and $\ab_{i(2)}$'s are used in estimating $\Lb_{1}$ given $\Lb_{2}$ and $\Lb_{2}$ given $\Lb_{1}$, respectively.

MMTR in \eqref{eq:px-tensr-mdl} is equivalent to a mixed model linear regression with the mean parameter $\bb$ and random effects covariance $\tau^2 \Lb_2\Lb_2^\top \otimes \Lb_1 \Lb_1^\top$. Using \eqref{eq:px-tensr-mdl} and previous identities,
\begin{align}
  \label{eq:e1}
  y_{ij} &=  \xb^\top_{ij} \bb + \zb_{ij(1)}^\top (\Lb_2 \otimes \Lb_1) \cb_{i(1)} + e_{ij}, \quad
  \yb_i =  \Xb_{i} \bb + \Zb_{i(1)} (\Lb_2 \otimes \Lb_1) \cb_{i(1)} + \eb_{i},
\end{align}
for $i=1, \ldots, n$ and $j = 1, \ldots, m_i$, where $\Xb_i$ is the matrix whose $j$th row is $\xb_{ij}^\top$ and $\Zb_{i(1)}$ is the matrix whose $j$th row is $\zb_{ij(1)}^\top$.  The first equation uses $\tr(\Zb_{ij}^\top \Lb_1 \Cb_i \Lb_2^\top) = \zb_{ij(1)}^\top \text{vec}(\Lb_1 \Cb_i \Lb_2^\top) = \zb_{ij(1)}^\top (\Lb_2 \otimes \Lb_1) \cb_{i(1)}$, which implies that the covariance matrix of random effects is $\cov \{(\Lb_2 \otimes \Lb_1) \cb_{i(1)}\} = \tau^2 \Lb_2\Lb_2^\top \otimes \Lb_1 \Lb_1^\top$. If $\Sigmab_1$ and $\Sigmab_2$ are full-rank and $n$ and $m_1, \ldots, m_n$ are sufficiently large, then we estimate $\Bb= \text{vec}^{-1}(\bb)$ and $\Sigmab=\Sigmab_2 \otimes \Sigmab_1$ using existing software such as \texttt{lme4}.
If $\| \cdot \|_F$ is the Frobenius norm, then the $ \Sigmab_1$  and $ \Sigmab_2$ estimates minimize $\|\hat \Sigmab - \Sigmab_2 \otimes \Sigmab_1\|_F$ and are obtained using the singular value decomposition (SVD) of a matrix formed by reordering $\hat{\Sigmab}$ entries \citep{vanLoan00}. The complexity of computing $\hat \Sigmab_1$  and $\hat \Sigmab_2$ using this method scales as $O(Q_1^3Q_2^3)$, which leads to inefficiency in practice even for small $Q_1$ and $Q_2$. Furthermore, high-dimensional extensions of  mixed models face similar issues.



The quasi-likelihood approaches bypass these problems by replacing $\Sigmab_{1}$ and $ \Sigmab_{2}$ with proxy matrices that have certain asymptotic properties. The resulting loss for the estimation of $\bb$ is equivalent to that of weighted least squares regression. The penalized $\bb$ estimate is obtained using a lasso-type penalty on $\bb$ that has optimal theoretical properties depending on the choice of proxy matrices and tuning parameter \citep{FanLi12,Lietal22}. The estimation of $\Sigmab$ is still challenging in that it requires that $Q_1Q_2 < \min_i m_i$, limiting their practical applications  \citep{Lietal22}.

Due to these reasons, the structured mixed model for MMTR has two forms that enable estimation of $\Lb_1$ given $\Lb_2$ and vice versa. These two models facilitate the estimation of $\Sigmab_1$ and $\Sigmab_2$ without any previous restrictions.
For $i=1, \ldots, n$ and $j = 1, \ldots, m_i$, $\tr(\Xb_{ij}^\top\Bb) = \xb^\top_{ij} \bb$, $\tr(\Zb_{ij}^\top \Ab_i) = \zb_{ij(1)}^\top \ab_{i(1)} = \zb_{ij(1)}^\top \Lb_{(1)} \cb_{(1)}$, and $\tr(\Zb_{ij} \Ab_i^\top) = \zb_{ij(2)}^\top \ab_{i(2)} = \zb_{ij(2)}^\top \Lb_{(2)} \cb_{(2)}$. Substituting these identities in \eqref{eq:px-tensr-mdl} gives
\begin{align}\label{eq:3}
  y_{ij} &=  \xb^\top_{ij} \bb + \zb_{ij(1)}^\top \Lb_{(1)} \cb_{i(1)} + e_{ij}
           = \xb^\top_{ij} \bb + \zb_{ij(2)}^\top \Lb_{(2)} \cb_{i(2)} + e_{ij} ,
           \quad e_{ij} \sim N(0, \tau^2),
\end{align}
where $\cb_{i(1)}$ and $ \cb_{i(2)}$ are distributed as $N(0, \tau^2 \Ib_{S_1 S_2})$. Combining \eqref{eq:3} for $j = 1, \ldots, m_i$ gives
\begin{align}\label{eq:4}
  \yb_{i} &= \Xb_{i} \bb + \Zb_{i(1)} \Lb_{(1)} \cb_{i(1)} + \eb_{i}, \quad
  \yb_i = \Xb_{i} \bb + \Zb_{i(2)} \Lb_{(2)} \cb_{i(2)} + \eb_{i}, \quad \eb_{i} \sim N(0, \tau^2 \Ib_{m_i}),
\end{align}
for $i=1, \ldots, n,$, where the $j$th row of $\Xb_i$ is $\xb_{ij}^\top$, the $j$th row of $\Zb_{i(1)}$ is $\zb_{ij(1)}^\top$, and the $j$th row of $\Zb_{i(2)}$ is $\zb_{ij(2)}^\top$. The two models in \eqref{eq:4} are used for estimating $\Lb_1$ given $\Lb_2$ and vice versa.

We regularize $\Bb, \Lb_1, \Lb_2$ to ensure numerically stable parameter estimation in the high-dimensional settings. As noted earlier, the trace regression literature offers a range of penalty options for $\Bb$, but we use the lasso penalty  because it facilitates efficient estimation via the \texttt{scalreg} package. An additional advantage of using \texttt{scalreg} is that its $\tau^2$ estimate has a smaller bias compared to other alternatives, including \texttt{glmnet} \citep{FriHasTib10}. The penalties on $\Lb_1$ and $ \Lb_2$ are chosen to avoid the need for the exact specification of $S_1$  and $S_2$, which are typically unknown. The $\Lb_1$ and $\Lb_2$ matrices define low-rank row and column covariance matrices in that $\Sigmab_{k} \approx \Lb_{k} \Lb_{k}^{\top}$ and $S_{k} \ll Q_{k}$ for $k=1, 2$. Leveraging similar results in factor analysis that bypass the number of latent factors specification, we set $S_1 = O(\log Q_1)$ and $S_2 = O(\log Q_2)$ and use the group lasso penalty on the columns of $\Lb_1$ and $ \Lb_2$ \citep{RocGeo16,Sri17}. This penalty offers a data-driven approach for estimating $S_1$ and $S_2$, where their estimates correspond to the highest column index with nonzero entries.

\section{Regularized Parameter Estimation}
\label{sec:em-algo}

We employ a regularized alternating expected-conditional maximization (AECM) algorithm with three cycles for parameter estimation. This algorithm is an EM extension that estimates $(\bb, \tau^{2})$, $\Lb_{1}$, and $\Lb_{2}$ across its three cycles, with each cycle conditioning on the remaining parameters. The first cycle estimates $(\bb, \tau^{2})$ using regularized weighted least squares regression. The second and third cycles treat $\cb_{i(2)}$'s and $\cb_{i(1)}$'s in \eqref{eq:4} as missing data and estimate $\Lb_{1}$ and $\Lb_{2}$, respectively. The derivation of the parameter updates is in the supplementary material.

Consider the $(t+1)$th AECM iteration. Let $\thetab^{(t)} = (\bb^{(t)}, \Lb_{1}^{(t)}, \Lb_{2}^{(t)}, \tau^{2(t)})$ be the parameter estimate at the end of the $t$th AECM iteration. Define $\Lambdab_i^{(t)} = \Zb_{i(1)} \Lb_{(1)}^{(t)} \Lb_{(1)}^{(t) \top} \Zb_{i(1)}^\top + \Ib_{m_i}$ for $i=1, \ldots, n$, where, as defined in Section \ref{sec:mmtr_notation}, $\Lb_{(1)}^{(t)} = \Lb_2^{(t)} \otimes \Lb_1^{(t)}$ and $\Zb_{i(1)}$ is the matrix whose $j$th row is $\zb_{ij(1)}^\top$. The first AECM cycle updates $(\bb, \tau^{2})$ given $(\Lb_{1}^{(t)}, \Lb_{2}^{(t)})$. Marginalizing over $\cb_{i(1)}$ in \eqref{eq:e1} implies that 
\begin{align}
  \label{eq:marg}
  \yb_i =  \Xb_{i} \bb + \bar \eb_{i}, \quad \EE(\overline \eb_{i}) = \zero, \quad \cov(\overline \eb_{i}) = \tau^{2} \Lambdab_{i}^{(t)}, \quad i = 1, \ldots, n.
\end{align}
Let $\Lambdab_i^{{1}/{2}(t)}$ be any matrix such that $\Lambdab_i^{{1}/{2}(t)} \Lambdab_i^{{1}/{2} (t)\top} = \Lambdab_i^{(t)}$, and $\breve \yb^{(t)} \in \RR^{N}$ and $\breve{\Xb}^{{(t)}} \in \RR^{N \times P_{1} P_{2}}$
be the scaled response and design matrix with $\Lambdab_i^{-{1}/{2}(t)} \yb_i \in \RR^{m_{i}}$ and $\Lambdab_i^{-{1}/{2}(t)} \Xb_i$ as their $i$th row blocks. Then, \eqref{eq:marg} implies that the loss for estimating $\bb$ is $\| \breve{\yb}^{{(t)}} - \breve{\Xb}^{{(t)}} \bb \|^{2}$. We jointly estimate ($\bb^{(t+1)}, \tau^{2(t+1)}$) by solving the following scaled lasso optimization problem  \citep{sun2012scaled}:
\begin{align}\label{eq:S_B}
  (\bb^{{(t+1)}}, \tau^{(t+1)}) = \underset{\bb \in \RR^{P_{1}P_{2}}, \tau \in \RR}{\argmin} \frac{\quad \| \breve{\yb}^{{(t)}} - \breve{\Xb}^{{(t)}} \bb \|^{2}}{2N \tau} + \frac{\tau}{2} + \lambda_{\Bb} \| \bb \|_1,
\end{align}
where $\lambda_{\Bb} \geq 0$ is a tuning parameter and $\| \bb \|_1 = \sum_{i=1}^{P_{1}} \sum_{j=1}^{P_{2}} | b_{ij} |$ is the lasso penalty. Given $\lambda_{\Bb}$, we estimate $\bb^{{(t+1)}}$ and $\tau^{2(t+1)}$ using the \texttt{scalreg} package and update $\thetab^{(t)}$ to $\thetab^{(t+1/3)} = (\bb^{(t+1)}, \Lb_{1}^{(t)}, \Lb_{2}^{(t)}, \tau^{2(t+1)})$, ending the first AECM cycle.

The second cycle treats $\cb_{1(2)}, \ldots, \cb_{n(2)}$ in \eqref{eq:4} as missing data and updates $\Lb_{1}^{(t)}$ to $\Lb_{1}^{(t+1)}$ given $ (\bb^{(t+1)}, \Lb_{2}^{(t)}, \tau^{2(t+1)})$. For $ i = 1, \ldots, n$,  \eqref{eq:4} implies that the complete data are $(\yb_i, \cb_{i(2)})$ and $\mub^{(t)}_{i(2)} = \EE(\cb_{i(2)} \mid \yb_i, \thetab^{{(t+1/3)}})$ and $\Gammab^{(t)}_{i(2)} = \EE(\cb_{i(2)} \cb_{i(2)}^\top| \yb_i, \thetab^{{(t+1/3)}})$ are analytically tractable. If $\lb_{1} =\text{vec}(\Lb_{1})$, then the second cycle E-step's objective is
\begin{align}
  \label{eq:sec-cyc-E}
  \Qcal_{(1)}(\lb_{1}) &\propto -(\Hb_{(1)}^{-{1}/{2}(t)} \gb^{(t)}_{(1)} - \Hb_{(1)}^{{1}/{2}(t)\top} \lb_1)^\top (\Hb_{(1)}^{-{1}/{2}(t)} \gb^{(t)}_{(1)} - \Hb_{(1)}^{{1}/{2}(t)\top} \lb_1),\\
  \Hb_{(1)}^{{(t)}} & = \sum_{i=1}^n \sum_{j=1}^{m_i} (\Ib_{S_1} \otimes \Zb_{ij(1)} \Lb^{(t)}_{2}) \Gammab_{i(2)}^{(t)} (\Ib_{S_1} \otimes \Lb^{(t)\top}_{2}  \Zb_{ij(1)}^\top), \nonumber\\
  \gb_{(1)}^{(t)} & = \sum_{i=1}^n \left[ \Ib_{S_1} \otimes \left\{ \sum_{j=1}^{m_i} (y_{ij} - \xb_{ij}^{\top} \bb^{{(t+1)}}) \Zb_{ij(1)}\right\} \Lb_{2}^{(t)} \right] \mub_{i(2)}^{(t)},\nonumber
\end{align}
where $\Hb_{(1)}^{{1}/{2}(t)}$ is any matrix such that $\Hb_{(1)}^{{1}/{2}(t)} \Hb_{(1)}^{{1}/{2}(t)\top} = \Hb_{(1)}^{(t)}$ and $\Hb_{(1)}^{-{1}/{2}(t)}$ is any generalized inverse of $\Hb_{(1)}^{{1}/{2}(t)}$. Using \eqref{eq:sec-cyc-E}, $- \Qcal_{(1)}(\lb_{1})$ is equivalent to a squared loss for estimating $\lb_{1}$.

The conditional maximization (CM) step in the second cycle minimizes $- \Qcal_{(1)}(\lb_{1})$ using the group lasso penalty on the columns of $\Lb_{1}$. Let $\lb_{(1):i}$ is the $i$th $Q_{1}$-dimensional block of $\lb_{1}$, which corresponds to the $i$th column of $\Lb_1$. Then, $\lb_{1}$ is estimated using the \texttt{gglasso} package as
\begin{align}
  \label{eq:sec-cyc-M}
  \lb_{1}^{(t+1)} = \argmin_{\lb_1 \in \RR^{Q_{1} S_{1}}} \; \big \lVert \Hb_{(1)}^{-{1}/{2}(t)} \gb^{(t)}_{(1)} - \Hb_{(1)}^{{1}/{2}(t)\top} \lb_1 \big \rVert_2^{2}
    + \lambda_{\Lb} \sum_{i=1}^{S_1} \lVert \lb_{(1):i} \rVert_2,
\end{align}
where  $\Lb^{(t+1)}_{1} = \text{vec}^{{-1}}(\lb_{1}^{(t+1)})$, $\lambda_{\Lb} \geq 0$ is a tuning parameter, and $\lVert \cdot \rVert_2$ is the Euclidean norm. The second cycle ends by updating $\thetab^{(t+1/3)}$ to $\thetab^{(t+2/3)} = (\bb^{(t+1)}, \Lb_{1}^{(t+1)}, \Lb_{2}^{(t)}, \tau^{2(t+1)})$.

Finally, the third AECM cycle updates $\thetab^{(t+2/3)}$ to $\thetab^{(t+1)}$ by updating $\Lb_{2}$ given $ (\bb^{(t+1)}, \Lb_{1}^{(t+1)}, \tau^{2(t+1)})$. The E and CM steps in this cycle change the indices 1 to 2 and vice versa in the second cycle E and CM steps. Specifically, for $i=1, \ldots, n$, the complete data are $(\yb_i, \cb_{i(1)})$ and $\mub^{(t)}_{i(1)} = \EE(\cb_{i(1)} \mid \yb_i, \thetab^{{(t+2/3)}})$ and $\Gammab^{(t)}_{i(1)} = \EE(\cb_{i(1)} \cb_{i(1)}^\top| \yb_i, \thetab^{{(t+2/3)}})$ are analytically tractable. The E step objective for the third cycle, $ \Qcal_{(2)}(\lb_{2})$, swaps the indices 1 and 2 in $ \Qcal_{(1)}(\lb_{1})$ as defined in \eqref{eq:sec-cyc-E}, where $\lb_{2}=\text{vec}(\Lb_{2})$ and $\Hb_{(2)}^{{(t)}}$ and $ \gb_{(2)}^{(t)}$ are obtained by swapping indices 1 and 2 in $\Hb_{(1)}^{(t)}$ and $\gb_{(1)}^{(t)}$. Following \eqref{eq:sec-cyc-M}, the  CM step in the third cycle is
\begin{align}
  \label{eq:third-cyc-M}
  \lb_{2}^{(t+1)} = \argmin_{\lb_2 \in \RR^{Q_{2} S_{2}}} \; \big \lVert \Hb_{(2)}^{-{1}/{2}(t)} \gb^{(t)}_{(2)} - \Hb_{(2)}^{{1}/{2}(t)\top} \lb_2 \big \rVert_2^{2}
    + \lambda_{\Lb} \sum_{i=1}^{S_2} \lVert \lb_{(2):i} \rVert_2,
\end{align}
where $\Lb^{(t+1)}_{2} = \text{vec}^{{-1}}(\lb_{2}^{(t+1)})$, $\lb_{(2):i}$ is the $i$th $Q_{2}$-dimensional block of $\lb_{2}$,  which corresponds to the $i$th column of $\Lb_2$, and $ \lb_{2}^{(t+1)} $ is estimated using the \texttt{gglasso} package. This cycle finishes the $(t+1)$th iteration of the AECM algorithm, updating the $\thetab^{{(t)}}$ to $\thetab^{(t+3/3)} = \thetab^{(t+1)} = (\bb^{(t+1)}, \Lb_{1}^{(t+1)}, \Lb_{2}^{(t+1)}, \tau^{2(t+1)})$. Algorithm \ref{algo1} summarizes the AECM algorithm.

\begin{algorithm}
\caption{MMTR AECM Algorithm}
\label{algo1}
  \textbf{First cycle:}
  \begin{itemize}
    \item \textbf{Regularized weighted least squares:} Given ${\Lb}_1^{(t)}$ and $\Lb_2^{(t)}$, compute $\Bb^{{(t+1)}}$ and ${\tau}^{2(t+1)}$ by solving \eqref{eq:S_B}. Update $\thetab^{{(t)}}$ to $\thetab^{(t+1/3)} = (\bb^{(t+1)}, \Lb_{1}^{(t)}, \Lb_{2}^{(t)}, \tau^{2(t+1)})$.
  \end{itemize}

  \textbf{Second cycle:}
  \begin{itemize}
    \item  \textbf{E step:} Compute $\Hb_{(1)}^{{(t)}}$ and $\gb_{(1)}^{(t)}$ in \eqref{eq:sec-cyc-E}.
    \item  \textbf{CM step:} Given ${\thetab}^{(t+1/3)}$, compute $\Lb_{1}^{{(t+1)}}$ by solving \eqref{eq:sec-cyc-M}. Update $\thetab^{(t+1/3)}$ to $\thetab^{(t+2/3)} = (\bb^{(t+1)}, \Lb_{1}^{(t+1)}, \Lb_{2}^{(t)}, \tau^{2(t+1)})$.
  \end{itemize}
  \textbf{Third cycle:}
  \begin{itemize}
    \item \textbf{E step:} Compute $\Hb_{(2)}^{{(t)}}$ and $\gb_{(2)}^{(t)}$ by swapping 1 and 2 in \eqref{eq:sec-cyc-E}.
    \item  \textbf{CM step:} Given ${\thetab}^{(t+2/3)}$, compute $\Lb_{2}^{{(t+1)}}$ by solving \eqref{eq:third-cyc-M}. Update $\thetab^{(t+2/3)}$ to $\thetab^{(t+1)} = (\bb^{(t+1)}, \Lb_{1}^{(t+1)}, \Lb_{2}^{(t+1)}, \tau^{2(t+1)})$.
  \end{itemize}
\end{algorithm}

We post-process the $\Lb_1$ and $\Lb_2$ estimates at the end of every iteration
to improve the efficiency and stability of Algorithm \ref{algo1}. For $t=1, 2,
\ldots, \infty$, any zero columns in $\Lb_1^{(t)}$ and $\Lb_2^{(t)}$ are
removed. Then, $\Lb_1^{(t)}$ and $\Lb_2^{(t)}$ are updated following Proposition
\ref{prop2} such that the maximum diagonal values of $\Sigmab_1^{(t)}$ and
$\Sigmab_2^{(t)}$ equal 1 and the ``overall variance'' $\sigma^{2(t)}$ is split equally between $\Sigmab_1^{(t)}$ and $\Sigmab_2^{(t)}$. This update maintains similar ranks for $\Lb_1^{(t)}$ and $\Lb_2^{(t)}$ across different values of tuning parameter $\lambda_{\Lb}$ and random initializations $\Lb_1^{(0)}$ and $\Lb_2^{(0)}$.

The MMTR's parameter estimate $\thetab^{{(\infty)}}$ is a regularized maximum
likehood estimate of $\thetab$. Let $\ell(\thetab)$ be the negative log likehood
function implied by \eqref{eq:px-tensr-mdl}, $\thetab_{1} = (\Bb, \tau^{2})$,
$\thetab_{2} = \Lb_{1}$, $\thetab_{3} = \Lb_{2}$, $\thetab = (\thetab_{1},
\thetab_{2}, \thetab_{3})$, and    $\Pcal_{\lambda_{\Bb}}(\thetab_{1})$,
$\Pcal_{\lambda_{\Lb}}(\thetab_{2})$, and $\Pcal_{\lambda_{\Lb}}(\thetab_{3})$
are the penalties on $\Bb$, $\Lb_1$, and $\Lb_2$ in  \eqref{eq:S_B},
\eqref{eq:sec-cyc-M}, and \eqref{eq:third-cyc-M}, respectively. Then, the
following proposition shows that $\{\thetab^{(t)}\}_{t=1}^{\infty}$ sequence produced by
Algorithm \ref{algo1} converges under weak assumptions.
\begin{proposition}\label{prop-conv}
The MMTR objective is $f(\thetab) = \ell(\thetab) + \Pcal_{\lambda_{\Bb}}(\thetab_{1}) + \Pcal_{\lambda_{\Lb}}(\thetab_{2})+\Pcal_{\lambda_{\Lb}}(\thetab_{3})$. Let $\Mcal(\cdot)$ be the function that maps $\thetab^{(t)}$ to $\thetab^{(t+1)}$ using Algorithm \ref{algo1}. Then, each iteration of Algorithm \ref{algo1} does not increase $f(\thetab)$. Furthermore, assume that the parameter space $\Thetab$ is compact and $f(\thetab) = f\{\Mcal(\thetab)\}$ only for the stationary points of $f(\thetab)$. Then, the $\{\thetab^{(t)} \}_{t=1}^{\infty}$ sequence converges to a stationary point.

\end{proposition}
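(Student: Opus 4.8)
The plan is to treat Algorithm \ref{algo1} as a penalized block-coordinate minorization–maximization (MM) scheme and then invoke a Zangwill-type global convergence argument, in the spirit of the classical EM/ECM analyses underlying \citet{Dyk00}. For the first claim (monotone descent), I would exhibit, for each EM-type cycle, the surrogate produced by its E-step. Writing the usual complete-data decomposition $\ell(\thetab) = -Q(\thetab \mid \tilde\thetab) + H(\thetab \mid \tilde\thetab)$, Gibbs' inequality gives $H(\thetab \mid \tilde\thetab) \le H(\tilde\thetab \mid \tilde\thetab)$ with equality at $\thetab = \tilde\thetab$, so $g(\thetab \mid \tilde\thetab) := -Q(\thetab \mid \tilde\thetab) + H(\tilde\thetab \mid \tilde\thetab)$ majorizes $\ell$ and is tangent at $\tilde\thetab$. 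Since the penalties do not involve the missing data, $G(\thetab \mid \tilde\thetab) := g(\thetab \mid \tilde\thetab) + \Pcal_{\lambda_{\Bb}}(\thetab_1) + \Pcal_{\lambda_{\Lb}}(\thetab_2) + \Pcal_{\lambda_{\Lb}}(\thetab_3)$ majorizes $f$ with $G(\tilde\thetab \mid \tilde\thetab) = f(\tilde\thetab)$. Cycles two and three solve \eqref{eq:sec-cyc-M} and \eqref{eq:third-cyc-M}, which are exactly the blockwise minimizers of the quadratic-plus-group-penalty surrogate $G$, so the MM inequality $f(\thetab^{\mathrm{new}}) \le G(\thetab^{\mathrm{new}} \mid \tilde\thetab) \le G(\tilde\thetab \mid \tilde\thetab) = f(\tilde\thetab)$ yields non-increase in each cycle, and chaining gives $f(\thetab^{(t+1)}) \le f(\thetab^{(t+2/3)}) \le f(\thetab^{(t+1/3)}) \le f(\thetab^{(t)})$. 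The AECM-specific point I would emphasize is that each cycle recomputes its own E-step, with a different augmentation ($\cb_{i(2)}$ in cycle two, $\cb_{i(1)}$ in cycle three), at the current iterate, so the tangency of $G$ to $f$ holds afresh at the start of every cycle; this is precisely what preserves descent despite the changing augmentation.

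The technical wrinkle in the first cycle is that \eqref{eq:S_B} is a scaled-lasso problem rather than a direct penalized Gaussian negative log-likelihood. I would reconcile the two by noting that, with $(\Lb_1, \Lb_2)$ held fixed, profiling $\tau$ out of both the conditional negative log-likelihood and the scaled-lasso objective yields the same estimate $\hat\tau^2 = \|\breve\yb^{(t)} - \breve\Xb^{(t)}\bb\|^2 / N$, and then arguing that the scaled-lasso minimizer does not increase the $(\bb, \tau^2)$-block of $f$. Establishing this descent rigorously — rather than merely that the two objectives share the same profile estimate of $\tau$ — is the step of the monotonicity argument I expect to demand the most care, since the scaled-lasso loss and the Gaussian log-likelihood disagree away from their common minimizer.

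For the second claim I would apply Zangwill's global convergence theorem with $f$ as the descent function and $\Gamma = \{\thetab : 0 \in \partial f(\thetab)\}$ as the solution set. Compactness of $\Thetab$ is assumed, and since $\ell$ is continuous and the penalties are nonnegative, $f$ is bounded below on $\Thetab$, so the monotone sequence $\{f(\thetab^{(t)})\}$ converges to some $f^\star$. The strict-descent hypothesis — $f(\thetab') < f(\thetab)$ for $\thetab' \in \Mcal(\thetab)$, $\thetab \notin \Gamma$ — follows from the first claim together with the standing assumption that $f(\thetab) = f\{\Mcal(\thetab)\}$ only at stationary points. The remaining hypothesis is closedness of the composite point-to-set map $\Mcal = \Mcal_3 \circ \Mcal_2 \circ \Mcal_1$; here I would verify that each conditional solver is single-valued and continuous by appealing to uniqueness of the scaled-lasso and group-lasso minimizers of the (strictly convex) surrogates, so that $\Mcal$ is continuous and hence closed. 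Zangwill's theorem then gives that every limit point of $\{\thetab^{(t)}\}$ lies in $\Gamma$ and that $f(\thetab^{(t)}) \to f^\star$, attained at a stationary point.

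To upgrade ``all limit points are stationary'' to convergence of the whole sequence, I would use the standard argument that $\|\thetab^{(t+1)} - \thetab^{(t)}\| \to 0$ — obtained from a sufficient-decrease bound $f(\thetab^{(t)}) - f(\thetab^{(t+1)}) \ge c\,\|\thetab^{(t+1)} - \thetab^{(t)}\|^2$ afforded by strong convexity of the quadratic surrogates — combined with the compactness and connectedness of the limit-point set on $f^{-1}(f^\star)$, forcing a single limit when the stationary points at level $f^\star$ are isolated. I expect the closedness of $\Mcal$ and this final single-point upgrade to be the main obstacles: the former because the lasso/group-lasso solutions can fail to be unique at the boundaries of their active sets, and the latter because the sufficient-decrease bound requires the Gram matrices $\Hb_{(1)}^{(t)}, \Hb_{(2)}^{(t)}$ to be positive definite and invokes an isolation condition not explicitly stated in the proposition.
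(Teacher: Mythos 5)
Your descent argument for cycles two and three is the same as the paper's: the paper's proof states exactly the two properties you derive from the complete-data decomposition (the penalized surrogate $-\Qcal_{(k)}+\Pcal_{\lambda_{\Lb}}$ majorizes $f$ in the relevant block and is tangent at the current iterate, recomputed afresh at the start of each cycle), and then chains $f(\thetab^{(t+1)})\le f(\thetab^{(t+2/3)})\le f(\thetab^{(t+1/3)})\le f(\thetab^{(t)})$. Where you diverge is in the two places you yourself flag as delicate. For the first cycle, the paper simply asserts that \eqref{eq:S_B} returns $\argmin_{\thetab_1} f(\thetab_1,\thetab_2^{(t)},\thetab_3^{(t)})$ and writes the inequality in one line; your observation that the scaled-lasso objective and the penalized Gaussian negative log-likelihood agree only after profiling out $\tau$ is a real subtlety the paper glosses over, and addressing it would strengthen rather than contradict the published argument. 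For the convergence claim, the paper does not invoke Zangwill: it argues only that $f(\thetab^{(t)})$ is a bounded non-increasing sequence on the compact $\Thetab$, hence convergent, and then leans entirely on the stated hypothesis that $f(\thetab)=f\{\Mcal(\thetab)\}$ only at stationary points to identify the limit as stationary. Your route buys a genuinely stronger conclusion (all limit points stationary, plus a path to whole-sequence convergence), but at the cost of extra hypotheses the proposition never assumes — closedness/single-valuedness of the lasso and group-lasso maps, positive definiteness of $\Hb_{(1)}^{(t)},\Hb_{(2)}^{(t)}$ for sufficient decrease, and isolation of stationary points at the limiting level set. If you present your version, you should either verify those or state them as additional assumptions; the paper's weaker assumption set only supports its weaker, more direct argument.
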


\section{Experiments}

\subsection{Setup}

MMTR's performance was evaluated using simulated and real data. The
high-dimensional penalized competitors were based on quasi-likelihood
\citep{Lietal22}, trace regression \citep{zhao2017}, posterior mode estimation
\citep{heiling2023glmmpen}, and generalized estimating equations (GEE)
\citep{Zhaetal19}. The posterior mode estimation algorithm implemented in
\texttt{glmmPen} R package failed with an error in the pre-screening step, so we
do not present comparisons with this method. In low-dimensional settings, we
used \texttt{lme4} \citep{Batetal13} for maximum likelihood estimation,
resulting in total four competing methods. The penalized quasi-likelihood (PQL)
method, implemented using the \texttt{scalreg} R package, and  \texttt{lme4}
used MMTR's vectorized form in \eqref{eq:3} for parameter estimation. In
contrast, trace regression and GEE did not have random effects, so we employed
\texttt{TensorReg} \citep{tensorreg_lib}  and \texttt{SparseReg}
\citep{sparsereg_lib} MATLAB packages to fit trace regression and GEE models.
MMTR used Algorithm 1 for parameter estimation.

The marginal model in \eqref{eq:marg} was used for simulations. Specifically,
\begin{align}
  \label{eq:lin_mod}
\yb = \Xb \bb + \eb, \quad \eb \sim N(\zero, \Lambdab), \quad \yb \in \RR^{N}, \; \Xb \in \RR^{{N \times P}}, \; \bb \in \RR^{P}, \; \eb \in \RR^{N},
\end{align}
where $\Lambdab$ was either a diagonal or block diagonal covariance matrix. We
used MMTR and GEE as the true models for simulating the data. The GEE model
specified $\Lambdab$ directly, whereas MMTR  specified $\Lambdab$ through random
effects design matrix, $\Sigmab_1, \Sigmab_2$, and $\tau^2$.

Except for \texttt{lme4}, the performance of MMTR and its competitors depended on model-specific choices. The PQL method replaced $\Lambdab$ with a proxy matrix $c^{2}\Ib$ and used a lasso penalty on $\bb$. Sparse trace regression set $\Lambdab = \tau^{2} \Ib$ and $\Bb = \text{vec}^{-1}(\bb)$ to be a low-rank and sparse matrix. We used two sparse GEE models: one with an unstructured covariance and the other with an equicorrelation covariance structure. Both these models employed a lasso penalty on $\bb$. The unstructured covariance model assumed that $\eb_{i}$ was distributed as $N(\zero, \tilde{\Lambdab})$  for every $i$, implying that $m_{i} = m$ and $\Lambdab = \Ib_{n} \otimes \tilde{\Lambdab}$. Due to this constraint, we used this model only in simulations. The sparse GEE equicorrelation model assumed that $\Lambdab$ was a correlation matrix with off-diagonal elements equal to $\alpha \in (0, 1)$. Unlike the previous GEE model, the $m_{i}$s in this model were allowed to be unequal, enabling its use for simulated and real data analyses. The PQL, trace regression, and GEE tuning parameters were selected using the recommended approach or cross-validation. MMTR had two tuning parameters: $\lambda_{\Bb}$ in \eqref{eq:S_B}, which controlled the sparsity of $\Bb$, and $\lambda_{\Lb}$  in \eqref{eq:sec-cyc-M} and \eqref{eq:third-cyc-M}, which determined the ranks of $\Sigmab_{1}$ and $\Sigmab_{2}$. We varied $\log \lambda_{\Bb}$ and $\log \lambda_{\Lb}$ on a grid and chose the best  $(\lambda_{\Bb}, \lambda_{\Lb})$ pair with the minimum extended Bayesian information criterion \citep{CheChe08}.

The empirical performance in estimation and prediction was quantified using relative estimation error and mean squared prediction error. Let $\hat \xib$ be the estimate of the parameter $\xib$, where $\xib \in \{\Bb, \Lambdab, \Sigmab_{1}. \Sigmab_2\}$, and $\hat \yb_{i}$ be the predicted value of $\yb_{i}$ in the test data. Then, the relative estimation error and mean square prediction errors are
\begin{align}
  \label{eq:err}
  \text{err}_{\xib} = {\| \hat \xib - \xib \|_{F}} / {\| \xib \|_{F}}, \quad \xib \in \{\Bb, \Lambdab, \Sigmab_{1}, \Sigmab_{2}\}, \quad \text{MSPE} =  \sum_{i=1}^{n^{*}} \| \yb_{i} - \hat \yb_{i}\|_{2}^{2} / n^{*},
\end{align}
where $ n^{*}$ is test data sample size and $\| \cdot \|_{F}$ and $\|\cdot \|_2$ are the Frobenius and Euclidean norms.

\subsection{Simulation Using the MMTR Model}

We simulated the data from MMTR in \eqref{eq:px-tensr-mdl}, resulting in a misspecified model for the MMTR's competitors. Two choices of parameter dimensions were used: $\Bb \in \mathbb{R}^{5 \times 5}$, $\Lb_1 \in \mathbb{R}^{5 \times 2}$, $\Lb_2 \in \mathbb{R}^{5 \times 2}$  (\textbf{Case 1}), and (2) $\Bb \in \mathbb{R}^{10 \times 10}$,   $\Lb_1 \in \mathbb{R}^{10 \times 3}$, $\Lb_2 \in \mathbb{R}^{10 \times 3}$  (\textbf{Case 2}). For the two cases, $\lfloor{0.4 P_1 P_2} \rfloor$  entries in $\Bb$ were randomly set to zero, and the remaining $\Bb$ entries were sampled with replacement from a uniform grid in $[-10, 10]$ with a step size of 0.5. The entries in $\Lb_1$ and $\Lb_2$ were sampled from a continuous uniform distribution on $[-1, 1]$, implying that the row and column covariance matrices were $\Sigmab_1 = \Lb_1 \Lb_1^\top$ and $\Sigmab_2 = \Lb_2 \Lb_2^\top$. The covariance matrices $\Sigmab_1$ and $\Sigmab_2$ were identified by scaling their nonzero singular values such that their product equaled one. The $\tau^2$ parameter was fixed at 0.5 for all simulations.

We varied $m_{i}$, $n$, and $N$ to evaluate their impact on MMTR's performance. For Cases 1 and 2, we simulated data with $n \in \{18, 27, 54, 81\}$ and $n \in \{64, 96, 192, 288\}$, respectively, ensuring that the ratio of $n$ to the parameter dimension was 0.4, 0.6, 1.2, and 1.8. For every case and $n$ combination, the number of observations per group was set to $m_{i} \in \{2, 6\}$, resulting in 16 different simulation scenarios. The \texttt{lme4} package required $N \geq n Q_1 Q_2$ for estimating covariance parameters, so it was excluded as a competitor. We performed 100 replications for each scenario and evaluated MMTR's performance using three competitors.

We estimated $\Bb$ and $\Lambdab$ in \eqref{eq:lin_mod} by selecting tuning parameters via ten-fold cross-validation. MMTR's tuning parameters, $\lambda_{\Bb}$ and $\lambda_{\Lb}$, ranged from $10^{-4}$ and $0.04$ and $10^{{-4}}$ and 1, respectively, on an equally spaced 10-by-10 log-scale grid. Sparse Kruskal regression  estimated only $\Bb$ by varying its tuning parameter from $10^{{-2}}$ to $10$ on an equally spaced log-scale grid. The quasi-likelihood method also required $N \geq n Q_1 Q_2$ for estimating $\Lambdab$, so it estimated only $\Bb$. For the two GEE models and the quasi-likelihood method, the tuning parameters for estimating $\Bb$ varied evenly over 10 values from $10^{-4}$ and $10$ on a log-scale grid. This resulted in $\Bb$ estimates for four methods and $\Lambdab$ estimates for only two.

We evaluated the performance of MMTR and its competitors using the metrics in \eqref{eq:err}  (Tables  \ref{table:mmtr_sim1} and \ref{table:mmtr_sim2}). For  $\Bb$ estimation, all four methods showed similar estimation accuracy across all simulation scenarios, demonstrating that mean parameter estimation was robust to model misspecification. In contrast, accurate $\Lambdab$ estimation was more challenging and required $m_{i}$ and $n$ to be sufficiently large. Furthermore, MMTR  outperformed both GEE models across all scenarios in $\Lambdab$ estimation, indicating that covariance estimation was sensitive to model misspecification. MMTR's relative accuracy increased with $n$ and $m_{i}$, whereas both GEE models' performance remained unaffected by increasing $n$ and $m_{i}$.

Unlike its competitors, MMTR estimated $\Sigmab_{1}$ and $\Sigmab_{2}$ using Proposition \ref{prop2} (Tables \ref{table:mmtr_covar1} and \ref{table:mmtr_covar2}). The results showed that $m_{i}$ had a greater impact on estimation accuracy of $\Sigmab_{1}$ and $\Sigmab_{2}$ than $N$. Specifically, for a fixed $N$, MMTR achieved higher accuracy when $m_{i} = 6$ than when $m_{i} = 2$; see the blue and red highlighted rows in Tables \ref{table:mmtr_covar1} and \ref{table:mmtr_covar2}. In summary, while all the methods had similar accuracy in estimating $\Bb$, MMTR outperformed its competitors in accurately estimating $\Sigmab_{1}, \Sigmab_{2}$, and $\Lambdab$, particularly for large $n$ and $m_{i}$.

\begin{table}[H]
  \caption{
    Relative estimation error in Case 1. The simulation follows MMTR in \eqref{eq:px-tensr-mdl}. Every estimation error is averaged over 100 simulation replications, with the Monte Carlo errors in parenthesis. A small relative error indicates high estimation accuracy.  GEE un. and GEE eq. denote the GEE models with unstructured and equicorrelated covariance matrix. Kruskal and Quasi denote the sparse Kruskal and quasi-likelihood methods.}
  \label{table:mmtr_sim1}
  \centering
  \resizebox{\textwidth}{!}{%
    \begin{tabular}[t]{|r|c|c|c|c|c|c|c|c|}
      \hline
      group size &
      \multicolumn{5}{c|}{err$_{\Bb}$} &
      \multicolumn{3}{c|}{err$_{\Lambdab}$} \\
      \hline
      & MMTR & Kruskal & GEE un. & GEE eq. & Quasi & MMTR & GEE un. & GEE eq. \\
      \hline
      \multicolumn{9}{|l|}{\textbf{n = 18}} \\
      \hline
      \hspace{1em} $m_i=2$ & 0.130 (0.0440) & 0.130 (0.0500) & 0.170 (0.0650) & 0.110 (0.0360) & 0.620 (0.2000) & 0.91 (0.052) & 0.95 (0.094) & 0.86 (0.052) \\
      \hline
      \hspace{1em} $m_i=6$ & 0.044 (0.0100) & 0.051 (0.0120) & 0.082 (0.0340) & 0.047 (0.0120) & 0.110 (0.0310) & 0.74 (0.330) & 0.93 (0.027) & 0.91 (0.037) \\
      \hline
      \multicolumn{9}{|l|}{\textbf{n = 27}} \\
      \hline
      \hspace{1em} $m_i=2$ & 0.099 (0.0290) & 0.077 (0.0250) & 0.087 (0.0430) & 0.073 (0.0230) & 0.310 (0.1500) & 0.99 (0.250) & 0.81 (0.160) & 0.86 (0.055) \\
      \hline
      \hspace{1em} $m_i=6$ & 0.028 (0.0076) & 0.037 (0.0100) & 0.043 (0.0120) & 0.035 (0.0097) & 0.083 (0.0220) & 0.49 (0.150) & 1.20 (0.470) & 0.91 (0.036) \\
      \hline
      \multicolumn{9}{|l|}{\textbf{n = 54}} \\
      \hline
      \hspace{1em} $m_i=2$ & 0.051 (0.0110) & 0.048 (0.0110) & 0.046 (0.0120) & 0.045 (0.0120) & 0.110 (0.0280) & 0.92 (0.290) & 0.81 (0.071) & 0.85 (0.056) \\
      \hline
      \hspace{1em} $m_i=6$ & 0.016 (0.0040) & 0.026 (0.0060) & 0.026 (0.0066) & 0.024 (0.0058) & 0.052 (0.0110) & 0.29 (0.090) & 0.88 (0.046) & 0.91 (0.037) \\
      \hline
      \multicolumn{9}{|l|}{\textbf{n = 81}} \\
      \hline
      \hspace{1em} $m_i=2$ & 0.034 (0.0066) & 0.037 (0.0096) & 0.035 (0.0092) & 0.035 (0.0091) & 0.082 (0.0210) & 0.59 (0.220) & 0.83 (0.062) & 0.85 (0.054) \\
      \hline
      \hspace{1em} $m_i=6$ & 0.013 (0.0034) & 0.021 (0.0057) & 0.021 (0.0056) & 0.020 (0.0056) & 0.042 (0.0094) & 0.24 (0.074) & 0.89 (0.043) & 0.91 (0.036) \\
      \hline
    \end{tabular}}
\end{table}

\begin{table}[H]
  \caption{
    Relative estimation error in Case 2. The simulation follows MMTR in \eqref{eq:px-tensr-mdl}. Every estimation error is averaged over 100 simulation replications, with the Monte Carlo errors in parenthesis. A small relative error indicates high estimation accuracy. GEE un. and GEE eq. denote the GEE models with unstructured and equicorrelated covariance matrix. Kruskal and Quasi denote the sparse Kruskal and quasi-likelihood methods.}
  \label{table:mmtr_sim2}
  \centering
  \resizebox{\textwidth}{!}{%
    \begin{tabular}[t]{|r|c|c|c|c|c|c|c|c|}
      \hline
      group size &
      \multicolumn{5}{c|}{err$_{\Bb}$} &
      \multicolumn{3}{c|}{err$_{\Lambdab}$} \\
      \hline
      & MMTR & Kruskal & GEE un. & GEE eq. & Quasi & MMTR & GEE un. & GEE eq. \\
      \hline
      \multicolumn{9}{|l|}{\textbf{n = 64}} \\
      \hline
      \hspace{1em} $m_i=2$ & 0.120 (0.0200) & 0.55 (0.097) & 0.120 (0.0230) & 0.096 (0.0160) & 0.890 (0.0440) & 0.92 (0.019) & 0.98 (0.0045) & 0.90 (0.019) \\
      \hline
      \hspace{1em} $m_i=6$ & 0.040 (0.0066) & 0.36 (0.050) & 0.039 (0.0091) & 0.033 (0.0050) & 0.150 (0.0350) & 0.87 (0.240) & 0.95 (0.0110) & 0.93 (0.015) \\
      \hline
      \multicolumn{9}{|l|}{\textbf{n = 96}} \\
      \hline
      \hspace{1em} $m_i=2$ & 0.085 (0.0130) & 0.42 (0.062) & 0.071 (0.0240) & 0.057 (0.0090) & 0.740 (0.0980) & 1.10 (0.140) & 0.76 (0.0700) & 0.90 (0.018) \\
      \hline
      \hspace{1em} $m_i=6$ & 0.022 (0.0028) & 0.34 (0.044) & 0.028 (0.0081) & 0.025 (0.0037) & 0.084 (0.0130) & 0.46 (0.098) & 0.84 (0.1800) & 0.93 (0.015) \\
      \hline
      \multicolumn{9}{|l|}{\textbf{n = 192}} \\
      \hline
      \hspace{1em} $m_i=2$ & 0.043 (0.0075) & 0.36 (0.049) & 0.033 (0.0057) & 0.033 (0.0059) & 0.150 (0.0280) & 1.10 (0.200) & 0.86 (0.0260) & 0.90 (0.019) \\
      \hline
      \hspace{1em} $m_i=6$ & 0.012 (0.0018) & 0.33 (0.043) & 0.018 (0.0026) & 0.018 (0.0026) & 0.049 (0.0063) & 0.24 (0.041) & 0.88 (0.0240) & 0.93 (0.015) \\
      \hline
      \multicolumn{9}{|l|}{\textbf{n = 288}}\\
      \hline
      \hspace{1em} $m_i=2$ & 0.028 (0.0045) & 0.34 (0.045) & 0.025 (0.0036) & 0.025 (0.0036) & 0.086 (0.0130) & 0.63 (0.130) & 0.88 (0.0230) & 0.90 (0.019) \\
      \hline
      \hspace{1em} $m_i=6$ & 0.010 (0.0015) & 0.32 (0.042) & 0.014 (0.0021) & 0.014 (0.0021) & 0.038 (0.0049) & 0.18 (0.032) & 0.90 (0.0200) & 0.93 (0.015) \\
      \hline
    \end{tabular}}
\end{table}

\begin{table}[H]
  \begin{minipage}{0.45\linewidth}
    \caption{MMTR's relative estimation error for $\Sigmab_{1}$ and $\Sigmab_{2}$ in Case 1. Every estimation error is averaged over 100 simulation replications, with the Monte Carlo errors in parenthesis. The red and blue lines represent simulations with equal $N$ values.}
    \label{table:mmtr_covar1}
    \centering
    \resizebox{0.6 \textwidth}{!}{%
    \begin{tabular}[t]{|r|c|c|}
      \hline
      group size &
      err$_{\Sigmab_1}$&
      err$_{\Sigmab_2}$\\
      \hline
      \multicolumn{3}{|l|}{\textbf{n = 18}} \\
      \hline
      \hspace{1em} $m_i=2$ & 1.10 (0.230) & 1.10 (0.270) \\
      \hline
      \rowcolor{LightBlue2} \hspace{1em} $m_i=6$ & 0.57 (0.260) & 0.54 (0.250) \\
      \hline
      \multicolumn{3}{|l|}{\textbf{n = 27}} \\
      \hline
      \hspace{1em} $m_i=2$ & 1.00 (0.290) & 1.10 (0.300) \\
      \hline
      \rowcolor{LightPink2} \hspace{1em} $m_i=6$ & 0.40 (0.170) & 0.39 (0.150) \\
      \hline
      \multicolumn{3}{|l|}{\textbf{n = 54}} \\
      \hline
      \rowcolor{LightBlue2} \hspace{1em} $m_i=2$ & 0.84 (0.290) & 0.83 (0.340) \\
      \hline
      \hspace{1em} $m_i=6$ & 0.21 (0.100) & 0.21 (0.085) \\
      \hline
      \multicolumn{3}{|l|}{\textbf{n = 81}} \\
      \hline
      \rowcolor{LightPink2} \hspace{1em} $m_i=2$ & 0.64 (0.320) & 0.63 (0.280) \\
      \hline
      \hspace{1em} $m_i=6$ & 0.17 (0.069) & 0.16 (0.066) \\
      \hline
    \end{tabular}}
\end{minipage}
\hspace{1em}
  \begin{minipage}{0.45\linewidth}
    \caption{MMTR's relative estimation error for $\Sigmab_{1}$ and $\Sigmab_{2}$ in Case 2. Every estimation error is averaged over 100 simulation replications, with the Monte Carlo errors in parenthesis. The red and blue lines represent simulations with equal $N$ values.}
    \label{table:mmtr_covar2}
    \centering
    \resizebox{0.6\textwidth}{!}{%
    \begin{tabular}[t]{|r|c|c|c|c|}
      \hline
      group size &
      err$_{\Sigmab_1}$&
      err$_{\Sigmab_2}$\\
      \hline
      \multicolumn{3}{|l|}{\textbf{n = 64}} \\
      \hline
      \hspace{1em} $m_i=2$ & 1.20 (0.200) & 1.20 (0.170) \\
      \hline
      \rowcolor{LightBlue2} \hspace{1em} $m_i=6$ & 0.80 (0.230) & 0.80 (0.210) \\
      \hline
      \multicolumn{3}{|l|}{\textbf{n = 96}} \\
      \hline
      \hspace{1em} $m_i=2$ & 1.20 (0.190) & 1.20 (0.190) \\
      \hline
      \rowcolor{LightPink2} \hspace{1em} $m_i=6$ & 0.47 (0.160) & 0.49 (0.160) \\
      \hline
      \multicolumn{3}{|l|}{\textbf{n = 192}} \\
      \hline
      \rowcolor{LightBlue2} \hspace{1em} $m_i=2$ & 1.10 (0.200) & 1.10 (0.210) \\
      \hline
      \hspace{1em} $m_i=6$ & 0.24 (0.056) & 0.23 (0.048) \\
      \hline
      \multicolumn{3}{|l|}{\textbf{n = 288}} \\
      \hline
      \rowcolor{LightPink2} \hspace{1em} $m_i=2$ & 0.86 (0.280) & 0.87 (0.260) \\
      \hline
      \hspace{1em} $m_i=6$ & 0.17 (0.041) & 0.18 (0.048) \\
      \hline
    \end{tabular}}
  \end{minipage}
\end{table}

\subsection{Simulations Using the Equicorrelation GEE model}
\label{sec:misspec_results}

We simulated the data using the equicorrelation GEE model based on \eqref{eq:lin_mod} to evaluate the impact of misspecification on MMTR's performance. The mean parameter in \eqref{eq:lin_mod} was generated as in the previous simulation, resulting in two cases: $\Bb \in \mathbb{R}^{5 \times 5}$ (\textbf{Case 1}) and $\Bb \in \mathbb{R}^{10 \times 10} (\textbf{Case 2})$. The $n$ choices for these two cases were the same as in the previous simulation to facilitate comparisons. Based on the previous results, we set $m_{i} = 6$ in both cases because MMTR demonstrated satisfactory performance only with sufficiently large  $m_{i}$. This setup resulted in eight different simulation scenarios. The covariance parameter $\Lambdab$ in this simulation was specified by one correlation parameter $\alpha$. For each of the 100 simulation replications across the eight scenarios, $\alpha$ was drawn from a continuous uniform distribution on $[0.2, 0.8]$. The three MMTR competitors (GEE, Kruskal, and quasi-likelihood), tuning parameter selection methods, and error metrics were identical to those used in the previous simulation.

All methods demonstrated similar accuracy in $\Bb$ estimation, but the equicorrelation GEE model outperformed the others in $\Lambdab$ estimation (Tables \ref{table:equicorr_sim1} and \ref{table:equicorr_sim2}). For both cases, $\Bb$ estimation accuracy improved as $n$ increased, providing further evidence that the mean parameter estimation was robust to model misspecification. Although MMTR's covariance parameter was misspecified relative to the simulation model, its  accuracy in estimating $\Lambdab$ improved up to a point with increasing $n$.
Furthermore, MMTR's covariance estimation errors were significantly smaller for
lower $\alpha$ values than higher values. A larger $\alpha$ corresponds to a
heavily misspecified covariance model, making it more challenging to approximate
using a Kronecker product structure; see supplementary materials for additional
simulation results.  In summary, while model misspecification implied that a larger $n$ and smaller $\alpha$ were required for MMTR's accurate covariance estimation, MMTR showed excellent performance in estimating the mean parameter across all $n$ choices.

\begin{table}[H]
  \caption{Relative estimation error in Case 1 with a misspecified equicorrelated covariance structure in \eqref{eq:lin_mod}. Every estimation error is averaged over 100 simulation replications, with the Monte Carlo errors in parenthesis. A small relative error indicates high estimation accuracy. GEE un.\ and GEE eq.\ denote the GEE models with unstructured and equicorrelated covariance matrix. Kruskal and Quasi denote the sparse Kruskal and quasi-likelihood methods.}
  \label{table:equicorr_sim1}
  \centering
  \resizebox{\textwidth}{!}{%
    \begin{tabular}[t]{|r|c|c|c|c|c|c|c|c|}
      \hline
      group size &
      \multicolumn{5}{c|}{err$_{\Bb}$} &
      \multicolumn{3}{c|}{err$_{\Lambdab}$} \\
      \hline
      & MMTR & Kruskal & GEE un. & GEE eq. & Quasi & MMTR & GEE un. & GEE eq. \\
      \hline
      \hspace{1em} $n=18, m_i=6$ & 0.038 (0.0110) & 0.025 (0.0071) & 0.0480 (0.0270) & 0.0200 (0.0053) & 0.062 (0.0160) & 1.90 (0.75) & 1.20 (2.200) & 0.150 (0.110) \\
      \hline
      \hspace{1em} $n=27, m_i=6$ & 0.022 (0.0065) & 0.020 (0.0046) & 0.0220 (0.0061) & 0.0160 (0.0038) & 0.046 (0.0100) & 1.10 (0.39) & 5.20 (3.000) & 0.110 (0.074) \\
      \hline
      \hspace{1em} $n=54, m_i=6$ & 0.013 (0.0031) & 0.013 (0.0031) & 0.0120 (0.0030) & 0.0110 (0.0026) & 0.029 (0.0057) & 0.76 (0.12) & 0.68 (0.130) & 0.088 (0.066) \\
      \hline
      \hspace{1em} $n=81, m_i=6$ & 0.011 (0.0029) & 0.011 (0.0024) & 0.0095 (0.0022) & 0.0088 (0.0020) & 0.023 (0.0048) & 0.73 (0.11) & 0.40 (0.085) & 0.071 (0.056) \\
      \hline
    \end{tabular}}
\end{table}

\begin{table}[H]
  \caption{Relative estimation error in Case 2 with a misspecified equicorrelated covariance structure in \eqref{eq:lin_mod}. Every estimation error is averaged over 100 simulation replications, with the Monte Carlo errors in parenthesis. A small relative error indicates high estimation accuracy. GEE un.\ and GEE eq.\ denote the GEE models with unstructured and equicorrelated covariance matrix. Kruskal and Quasi denote the sparse Kruskal and quasi-likelihood methods.}
  \label{table:equicorr_sim2}
  \centering
  \resizebox{\textwidth}{!}{%
    \begin{tabular}[t]{|r|c|c|c|c|c|c|c|c|}
      \hline
      group size &
      \multicolumn{5}{c|}{err$_{\Bb}$} &
      \multicolumn{3}{c|}{err$_{\Lambdab}$} \\
      \hline
      & MMTR & Kruskal & GEE un. & GEE eq. & Quasi & MMTR & GEE un. & GEE eq. \\
      \hline
      \hspace{1em} $n=64, m_i=6$ & 0.0180 (0.0023) & 0.36 (0.042) & 0.0140 (0.00490) & 0.0100 (0.00130) & 0.060 (0.0170) & 1.20 (0.30) & 0.42 (0.090) & 0.130 (0.077) \\
      \hline
      \hspace{1em} $n=96, m_i=6$ & 0.0110 (0.0015) & 0.34 (0.039) & 0.0094 (0.00130) & 0.0081 (0.00089) & 0.033 (0.0039) & 0.90 (0.13) & 2.50 (2.800) & 0.098 (0.054) \\
      \hline
      \hspace{1em} $n=192, m_i=6$ & 0.0079 (0.0016) & 0.33 (0.037) & 0.0059 (0.00070) & 0.0057 (0.00066) & 0.019 (0.0016) & 0.72 (0.12) & 0.77 (0.082) & 0.054 (0.033) \\
      \hline
      \hspace{1em} $n=288, m_i=6$ & 0.0066 (0.0015) & 0.32 (0.036) & 0.0047 (0.00053) & 0.0046 (0.00046) & 0.014 (0.0013) & 0.70 (0.13) & 0.42 (0.045) & 0.043 (0.031) \\
      \hline
    \end{tabular}}
\end{table}

\subsection{Real Data: Labeled Faces in the Wild}

We used the Labeled Faces in the Wild (LFW) database for our real data analysis, a benchmark dataset widely used to evaluate the empirical performance of array-variate models \citep{LFWTech,lock2018tensor}. The LFW data contains multiple images of famous individuals and 73 real-valued image attributes for every image \citep{Neeraj2009}. For this analysis, we focused on a subset of 5749 individuals whose images were aligned using deep funneling \citep{Huang2012a}. To ensure sufficiently large $m_i$ values for MMTR, we selected individuals with 4 to 50 images, resulting in a final dataset of 594 individuals and a total of 5124 images.

The data were pre-processed to define the covariates. Each image was converted to grayscale and resized to $32 \times 32$ dimensions using \textit{grayscale} and \textit{resize} functions in the \texttt{imager} library \citep{imager_lib}. This process generated $32 \times 32$ fixed and random effects covariate matrices for all the 5124 images. The response variable was defined as the first principal component of the 73 image attributes for each image.

We randomly split the images into ten training and testing sets. Each test set consisted of four randomly selected images from each of the 115 individuals with at least twelve images, resulting in ten training sets of 4664 images and ten testing sets of 460 images. We fit  MMTR and its four competitors to each training set. Tuning parameter selection methods remain unchanged from the simulations. Unlike the simulations,  \texttt{lme4} and quasi-likelihood methods included random intercepts to account for dependencies within images of the same individual. The GEE model with an unstructured covariance parameter was inapplicable because the $m_i$ values varied across the selected individuals. Their performance was evaluated on the test sets using  MSPE in \eqref{eq:err} and prediction $R^2$.

MMTR achieved the lowest MSPE and highest prediction  $R^2$ among all the methods (Table \ref{table:lfw_mspe}). The averaged estimate of $\Bb$ across the ten training sets revealed key facial features in the images (Figure \ref{fig:mean_bhat}). Based on the simulation results, MSPE, and prediction $R^{2}$, we concluded that MMTR outperforms its competitors in accurately predicting resized LFW images.

\begin{table}[H]
  \caption{MSPE and prediction $R^{2}$. The MSPE and prediction $R^{2}$ values are averaged across ten testing sets of the LFW data, with the Monte Carlo errors in parenthesis. GEE equicorr.\,  Kruskal, and Quasi denote the GEE with equicorrelated covariance matrix, sparse Kruskal, and quasi-likelihood methods.}
  \label{table:lfw_mspe}
  \centering
    \begin{tabular}[t]{|r|c|c|c|c|c|}
      \hline
      & MMTR & Quasi & LME4 & GEE equicorr. & Kruskal \\
      \hline
      MSPE & 2.19 (0.13) & 2.32 (0.14) & 2.81 (0.20) & 3.18 (0.24) & 5.47 (0.42) \\
      \hline
      $R^{2}$ & 0.84 (0.012) & 0.83 (0.013) & 0.79 (0.009) & 0.76 (0.012) & 0.60 (0.018) \\
      \hline
    \end{tabular}
\end{table}

\begin{figure}[H]
\includegraphics[width=\textwidth]{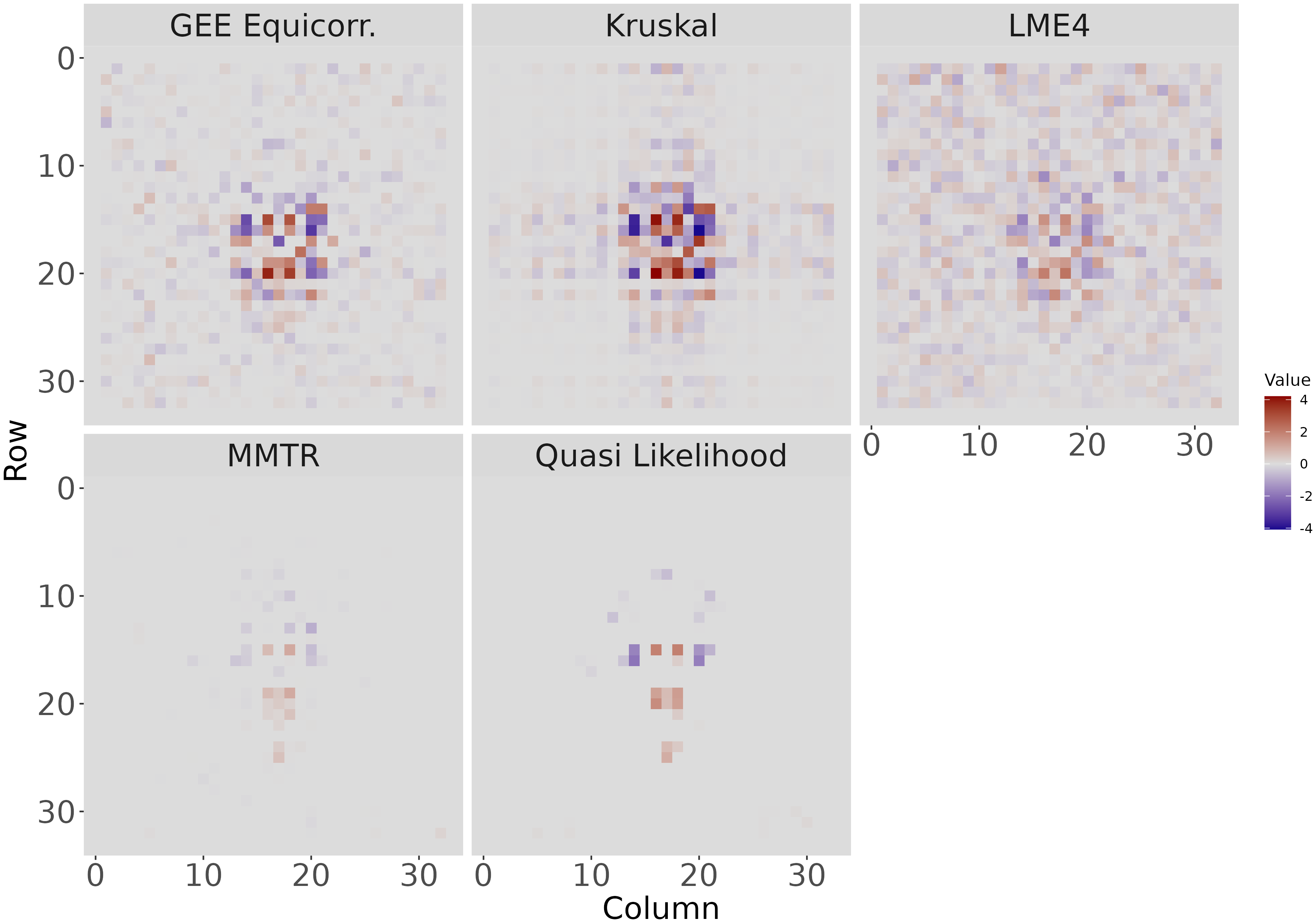}
\caption{Mean parameter ${\Bb}$ estimated by MMTR, quasi-likelihood, lme4, GEE with equicorrelation covariance matrix, and sparse Kruskal methods. The $\Bb$ estimates are averaged across ten training sets of the LFW data.}
\label{fig:mean_bhat}
\end{figure}

\section{Discussion}

We have focused on a correlated scalar responses and matrix-variate covariates,
but MMTR generalizes to array-variate covariates. If the random effect $\Ab_i$
in \eqref{eq:tensr-mdl} is a $Q_1 \times \cdots \times Q_d$ array, then its
separable covariance structure generalizes as $\EE\{\text{vec}(\Ab_i)\} = \zero$
and $\cov \{\text{vec}(\Ab_i)\} = \tau^2 \Sigmab_d \otimes \cdots \otimes
\Sigmab_1$, where $\Ab_i$ is an array-variate normal distribution with zero mean
array and dimension $j$-specific covariance matrix $\Sigmab_j$ ($j=1, \ldots,
d$) \citep{Hof11}. For $j=1, \ldots, d$ and $S_j \ll Q_j$, we assume that there
exists $\Lb_j \in \RR^{Q_j \times S_j}$ such that $\Sigmab_j \approx \Lb_j
\Lb_j^\top$, $\Cb_i$ in \eqref{eq:px-tensr-mdl} is an $S_1 \times \cdots \times
S_d$ array, and $\Lb_j$ is regularized by imposing group lasso penalty on its
columns to enable sparse estimation. Similarly, Propositions \ref{prop1},
\ref{prop2}, and \ref{prop-conv} naturally generalize to the array-variate MMTR
extension.  Finally, the array-variate extension of Algorithm \ref{algo1} has $d+1$ cycles, where the first cycle estimates $\Bb$ and $\tau^2$ using regularized array-variate regression \citep{zhou2013} and the next $d$ cycles estimate $\Lb_j$ given the remaining parameters using a version of \eqref{eq:sec-cyc-M} for $j=1, \ldots, d$. We are developing an MMTR extension for correlated array-variate responses.



The objectives for estimating $\Lb_1$ and $\Lb_2$ in Algorithm \ref{algo1} rely
on the assumption that the random effects and error terms are Gaussian, but our
parameter estimation algorithm generalizes to the case where we only assume the
first and second moments of the random effects exist.  In this case, Algorithm
\ref{algo1} iteratively replaces the first two moments of the random effects,
$\EE(\Ab_i \mid \yb_i, \thetab^{(t)})$ and $\EE \{\text{vec} (\Ab_i) \text{vec}
(\Ab_i)^\top \mid \yb_i, \thetab^{(t)} \}$ ($i=1, \ldots, n$), by those based on
the assumption that  $\Ab_i$ follows a matrix normal distribution, resulting in
the objectives  \eqref{eq:sec-cyc-M} and \eqref{eq:third-cyc-M}.

\section*{Acknowledgments}

Ian Hultman and Sanvesh Srivastava were partially supported by grants from the
National Institutes of Health (1DP2MH126377-01) and the National Science
Foundation (DMS-1854667). The authors thank Kshitij Khare, Joe Lang,
Boxiang Wang, and Dale Zimmerman for their valuable feedback on an earlier
version of this manuscript. The code used in the
experiments is available at \url{https://github.com/IHultman/MMTR}.

\bibliographystyle{Chicago}
\bibliography{papers}

\newpage
\setcounter{equation}{0} 
\setcounter{section}{0}
\setcounter{figure}{0}
\setcounter{table}{0}

\begin{center}
    {\LARGE\bf Supplementary Material: Regularized Parameter Estimation in Mixed
  Model Trace Regression}
\end{center}

\section{Proof of the Propositions}

\subsection{Proof of Proposition 2.1}

Recall that Proposition 2.1 from the main manuscript.
\begin{proposition}\label{prop1}
  Assume that $(\Lb_2)_{j1} = 1$ and $(\Lb_2)_{j2} = \cdots = (\Lb_2)_{jS_2} = 0$. Then, $(\Sigmab_2)_{jj} = 1$ and the $j$th diagonal $Q_1 \times Q_1$ block of $\Sigmab_2 \otimes \Sigmab_1$ is $\Sigmab_1$.
\end{proposition}
\begin{proof}[Proof of Proposition \ref{prop1}] 
Let $\Lb_2 \in \mathbb{R}^{Q_2 \times S_2}$ denote the matrix such that $\Lb_2 \Lb_2^\top = \Sigmab_2$, let $\lb_{i:}$ denote the $i$th row vector of $\Lb_2$ and let $\eb_1 \in \mathbb{R}^{S_2}$ denote the first standard basis vector in $\mathbb{R}^{S_2}$. If the $j$th row of $\Lb_2$ is $\eb_1$, then:
\begin{align*}
  \Lb_2 =
  \begin{bmatrix}
    \lb_{1:}^\top \\
    \vdots \\
    \eb_1^\top \\
    \vdots \\
    \lb_{Q_2:}^\top
  \end{bmatrix} \implies
  \Lb_2 \Lb_2^\top =
  \begin{bmatrix}
    \lb_{1:}^\top \lb_{1:} & \ldots & (\Lb_2)_{11} & \ldots & \lb_{1:}^\top \lb_{Q_2:} \\
    \vdots & \ddots & \vdots && \vdots \\
    (\Lb_2)_{11} & \ldots & 1 & \ldots & (\Lb_2)_{Q_2 1}\\
    \vdots && \vdots & \ddots & \vdots \\
    \lb_{Q_2:}^\top \lb_{1:} & \ldots & (\Lb_2)_{Q_2 1} & \ldots & \lb_{Q_2:}^\top \lb_{Q_2:}
  \end{bmatrix},
\end{align*}
demonstrating that $(\Sigmab_2)_{jj} = 1$ which implies, by definition of the Kronecker product, that the $j$th diagonal $Q_1 \times Q_1$ block of $\Sigmab_2 \otimes \Sigmab_1$ is $\Sigmab_1$.
\end{proof}

\subsection{Proof of Proposition 2.2}

Recall that Proposition 2.2 from the main manuscript.
\begin{proposition}\label{prop2}
  For any matrix $\Lb \in \RR^{Q \times S}$, there exists an orthonormal matrix $\Qb_j \in \RR^{S \times S}$ and constant $c$ such that $c \Lb \Qb_j^\top$ is a matrix whose $j$th row is the first standard basis vector in $\RR^S$.
\end{proposition}

\begin{proof}[Proof of Proposition \ref{prop2}]
Let $\Lb$ be any matrix in $\mathbb{R}^{Q \times S}$ such that $\Lb \Lb^\top = \Sigmab$, and let $\Qb_j \in \RR^{S \times S}$ be the Householder matrix that reflects the $j$th row vector of $\Lb$ onto the first standard basis vector in $\mathbb{R}^S$, denoted as $\eb_1$. Let $\lb_{i:}$ denote the $i$th row vector of $\Lb$, then
\begin{align*}
  \Lb =
  \begin{bmatrix}
    \lb_{1:}^\top \\
    \vdots \\
    \lb_{j:}^\top \\
    \vdots \\
    \lb_{Q:}^\top
  \end{bmatrix} \implies
  \Lb \Qb_j^\top =
  \begin{bmatrix}
    \lb_{1:}^\top \Qb_j^\top \\
    \vdots \\
    \norm{\lb_{j:}}_2 \eb_1^\top \\
    \vdots \\
    \lb_{Q:}^\top \Qb_j^\top
  \end{bmatrix} \implies
  \frac{1}{\norm{\lb_{j:}}_2} \Lb \Qb_j^\top =
  \begin{bmatrix}
    \frac{1}{\norm{\lb_{j:}}_2} \lb_{1:}^\top \Qb_j^\top \\
    \vdots \\
    \eb_1^\top \\
    \vdots \\
    \frac{1}{\norm{\lb_{j:}}_2} \lb_{Q:}^\top \Qb_j^\top
  \end{bmatrix},
\end{align*}
which has the form described in Proposition \ref{prop1}. Because Householder transformations are orthonormal,
\begin{align*}
  \frac{1}{\norm{\lb_{j:}}_2^2} \Lb \Qb_j^\top \Qb_j \Lb^\top = \frac{1}{\norm{\lb_{j:}}_2^2} \Sigmab,
\end{align*}
which is just the initial covariance matrix $\Sigmab$ scaled such that its $j$th diagonal entry is equal to one.
\end{proof}

\subsection{Proof of Proposition 3.1}

Recall Proposition 3.1 from the main manuscript. Let $\ell(\thetab)$ be the negative log likehood function implied by the MMTR, $\thetab_{1} = (\Bb, \tau^{2})$, $\thetab_{2} = \Lb_{1}$, $\thetab_{3} = \Lb_{2}$, $\thetab = (\thetab_{1}, \thetab_{2}, \thetab_{3})$, and  $\Pcal_{\lambda_{\Bb}}(\thetab_{1})$, $\Pcal_{\lambda_{\Lb}}(\thetab_{2})$, and $\Pcal_{\lambda_{\Lb}}(\thetab_{3})$ are the penalties on $\Bb$, $\Lb_1$, and $\Lb_2$, respectively. Then, the Proposition 3.1 from the main manuscript is as follows.
\begin{proposition}\label{prop-conv}
The MMTR objective is $f(\thetab) = \ell(\thetab) +
\Pcal_{\lambda_{\Bb}}(\thetab_{1}) +
\Pcal_{\lambda_{\Lb}}(\thetab_{2})+\Pcal_{\lambda_{\Lb}}(\thetab_{3})$.
Let $\Mcal(\cdot)$ be the function that maps $\thetab^{(t)}$ to
$\thetab^{(t+1)}$ using Algorithm 1 in the main manuscript. Then, each
iteration of Algorithm 1  does not increase $f(\thetab)$. Furthermore,
assume that the parameter space $\Thetab$ is compact and $f(\thetab) =
f\{\Mcal(\thetab)\}$ only for the stationary points of $f(\thetab)$.
Then, the $\{\thetab^{(t)} \}_{t=1}^{\infty}$ sequence converges to a
stationary point.
\end{proposition}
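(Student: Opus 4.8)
The plan is to read Algorithm 1 as a block-coordinate generalized EM scheme (equivalently, a minorize--maximize scheme applied to the penalized log-likelihood) and to prove the two assertions separately: first the monotone descent $f\{\Mcal(\thetab)\} \le f(\thetab)$, and then convergence of the iterates via Zangwill's global convergence theorem as adapted to EM by Wu. Throughout write $\Pcal(\thetab) = \Pcal_{\lambda_{\Bb}}(\thetab_1) + \Pcal_{\lambda_{\Lb}}(\thetab_2) + \Pcal_{\lambda_{\Lb}}(\thetab_3)$ for the full penalty, so that $f = \ell + \Pcal$, and note at the outset that $f$ is continuous on $\Thetab$ because $\ell$, the $\ell_1$ penalty, and the group-$\ell_2$ penalties are continuous.

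For the descent property I would treat each of the three cycles as a majorize--minimize step against the \emph{single} objective $f$. For cycle $k$, let $\Qcal_k(\thetab \mid \thetab')$ denote the expected complete-data negative log-likelihood under the data augmentation used in that cycle (the marginal model of \eqref{eq:marg} in cycle 1, the latent $\cb_{i(2)}$ in cycle 2, the latent $\cb_{i(1)}$ in cycle 3), and let $\Hcal_k$ be the corresponding entropy term, so that $\ell(\thetab) = \Qcal_k(\thetab \mid \thetab') + \Hcal_k(\thetab \mid \thetab')$. The information inequality $\Hcal_k(\thetab \mid \thetab') \le \Hcal_k(\thetab' \mid \thetab')$ then shows that $g_k(\thetab \mid \thetab') := \Qcal_k(\thetab \mid \thetab') + \Hcal_k(\thetab' \mid \thetab') + \Pcal(\thetab)$ majorizes $f$ with equality at $\thetab'$. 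Each CM step minimizes $g_k(\cdot \mid \thetab')$ over its active block with the others fixed: cycle 1 over $(\bb, \tau^2)$ through the scaled-lasso problem \eqref{eq:S_B}, cycles 2 and 3 over $\Lb_1$ and $\Lb_2$ through the group-lasso problems \eqref{eq:sec-cyc-M} and \eqref{eq:third-cyc-M}. Since $g_k(\thetab^{\text{new}} \mid \thetab') \le g_k(\thetab' \mid \thetab') = f(\thetab')$ and $f(\thetab^{\text{new}}) \le g_k(\thetab^{\text{new}} \mid \thetab')$, no cycle increases $f$; composing the three cycles gives $f\{\Mcal(\thetab)\} \le f(\thetab)$.

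For convergence I would invoke the point-to-set global convergence theory of Zangwill. Compactness of $\Thetab$ makes $f$ bounded below and forces the monotone sequence $\{f(\thetab^{(t)})\}$ to converge, while guaranteeing that $\{\thetab^{(t)}\}$ has limit points in $\Thetab$. Treating $\Mcal$ as a closed point-to-set map — genuinely set-valued, since the lasso and group-lasso subproblems need not have unique minimizers — and using $f$ as a descent function that strictly decreases off the stationary set (exactly the hypothesis that $f(\thetab) = f\{\Mcal(\thetab)\}$ only at stationary points), Zangwill's theorem yields that every limit point of $\{\thetab^{(t)}\}$ is a fixed point of $\Mcal$, hence a stationary point of $f$, and that $f(\thetab^{(t)})$ converges to the stationary value. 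To upgrade ``limit points are stationary'' to convergence of the whole sequence one additionally argues $\norm{\thetab^{(t+1)} - \thetab^{(t)}} \to 0$ together with connectedness (or isolation) of the limit set, which I would state as the final step.

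The main obstacle is verifying cleanly that each cycle is a genuine majorize--minimize step for the one objective $f$. Two points need care. First, the AECM uses \emph{different} augmentations across cycles, so I must check that each surrogate $g_k$ dominates the same $f$ at the current iterate rather than three unrelated objectives; this is what lets the per-cycle decreases accumulate into a decrease of $f$. Second, cycle 1 solves the scaled-lasso criterion \eqref{eq:S_B} rather than directly minimizing $\ell + \Pcal_{\lambda_{\Bb}}$ over $(\bb, \tau^2)$, so I would confirm that \eqref{eq:S_B} is equivalent to minimizing the profiled penalized marginal negative log-likelihood in this block, up to the $\log\det \Lambdab_i^{(t)}$ terms that are constant there. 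Finally, because $f$ is nonsmooth, ``stationary point'' must be read in the subdifferential sense, and it is the block separability of $\Pcal$ across $(\thetab_1, \thetab_2, \thetab_3)$ that legitimizes identifying the fixed points of the block scheme with genuine stationary points of $f$; I would support the assumed fixed-point/stationarity equivalence by a coordinatewise-minimum characterization in the spirit of Tseng.
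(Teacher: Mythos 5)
Your descent argument is essentially the paper's: the paper likewise shows, cycle by cycle, that the surrogate $-\Qcal_{(k)}+\Pcal_{\lambda_{\Lb}}$ majorizes $f$ in the active block with equality at the current iterate, that the CM step minimizes the surrogate, and then chains the three resulting inequalities; the only cosmetic difference is that the paper treats cycle 1 as a direct block minimization of $f$ over $(\bb,\tau^{2})$ rather than wrapping it in a (degenerate) MM surrogate as you do. Where you genuinely diverge is the convergence step. The paper's argument there is terse: compactness makes $f(\thetab^{(t)})$ a bounded non-increasing sequence, hence convergent, and the hypothesis that $f(\thetab)=f\{\Mcal(\thetab)\}$ only at stationary points is then invoked to declare the limit stationary --- implicitly passing from convergence of the objective values to convergence of the iterates themselves. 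Your route through Zangwill/Wu (closedness of the point-to-set solution map of the lasso and group-lasso subproblems, limit points in the compact $\Thetab$, $f$ as a descent function strictly decreasing off the stationary set, and then $\lVert\thetab^{(t+1)}-\thetab^{(t)}\rVert\to 0$ together with connectedness or isolation of the limit set to upgrade to convergence of the whole sequence) is more work to verify but is the honest way to cross the gap the paper crosses by assumption. Your two cautions --- that the scaled-lasso criterion \eqref{eq:S_B} must be checked to be equivalent to minimizing the penalized marginal negative log-likelihood in the $(\bb,\tau^{2})$ block up to terms constant there, and that stationarity of the nonsmooth $f$ must be read in the subdifferential sense with block separability of the penalty justifying the fixed-point/stationarity identification --- are well placed and are not addressed in the paper's proof.
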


\begin{proof}[Proof of Proposition \ref{prop-conv}]
  Let $\thetab^{(t)} = (\thetab_1^{(t)}, \thetab_2^{(t)}, \thetab_3^{(t)})$ for any $t = 0, 1, 2, \ldots, \infty$. Then, for any non-negative $\lambda_{\Bb}$ and $\lambda_{\Lb}$,
  \begin{align}\label{eq:cyc1-ineq}
    f(\thetab_1^{(t+1)}, \thetab_2^{(t)}, \thetab_3^{(t)}) \leq f(\thetab_1^{(t)}, \thetab_2^{(t)}, \thetab_3^{(t)})
  \end{align}
  because the first cycle implies that $\thetab_1^{(t+1)} =
  \underset{\thetab_1} {\argmin} f(\thetab_1, \thetab_2^{(t)},
  \thetab_3^{(t)})$; see (8) in the main manuscript. We use two
  properties of the AECM algorithm's second cycle. First,
  $-\Qcal_{(1)}(\thetab_2) + \Pcal_{\lambda_{\Lb}}(\thetab_2)$
  majorizes $f(\thetab_1^{(t+1)}, \thetab_2, \thetab_3^{(t)})$ for
  every $\thetab_2$, where $\Qcal_{(1)}$ is defined in (9) in the main
  manuscript; therefore,  $-\Qcal_{(1)}(\thetab_2) +
  \Pcal_{\lambda_{\Lb}}(\thetab_2) \leq f(\thetab_1^{(t+1)},
  \thetab_2, \thetab_3^{(t)})$.  Second, $-\Qcal_{(1)}(\thetab_2^{(t)}) +
  \Pcal_{\lambda_{\Lb}}(\thetab_2^{(t)}) = f(\thetab_1^{(t+1)},
  \thetab_2^{(t)}, \thetab_3^{(t)})$; see (9) and (10) in the main
  manuscript for details. We use these properties to obtain that
  \begin{align}\label{eq:cyc2-ineq}
    f(\thetab_1^{(t+1)}, \thetab_2^{(t+1)}, \thetab_3^{(t)}) &= -\Qcal_{(1)}(\thetab_2^{(t+1)}) + \Pcal_{\lambda_{\Lb}}(\thetab_{2}^{(t+1)}) + \nonumber\\
    & \quad \; f(\thetab_1^{(t+1)}, \thetab_2^{(t+1)},
      \thetab_3^{(t)}) - \{- \Qcal_{(1)}(\thetab_2^{(t+1)}) + \Pcal_{\lambda_{\Lb}}(\thetab_{2}^{(t+1)}) \} \nonumber\\
    &\overset{(i)}{\leq}  - \Qcal_{(1)}(\thetab_2^{(t+1)}) + \Pcal_{\lambda_{\Lb}}(\thetab_{2}^{(t+1)}) \nonumber\\
    &\overset{(ii)}{\leq} -  \Qcal_{(1)}(\thetab_2^{(t)}) + \Pcal_{\lambda_{\Lb}}(\thetab_{2}^{(t)}) \nonumber\\
    &\overset{(iii)}{=} f(\thetab_1^{(t+1)}, \thetab_2^{(t)}, \thetab_3^{(t)}),
  \end{align}
  where $(i)$ follows because $f(\thetab_1^{(t+1)}, \thetab_2^{(t+1)}, \thetab_3^{(t)}) - \{-\Qcal_{(1)}(\thetab_2^{(t+1)}) + \Pcal_{\lambda_{\Lb}}(\thetab_{2}^{(t+1)}) \geq 0$ using the first property of  AECM algorithm's second cycle, $(ii)$ follows because $\thetab_2^{(t+1)} = \underset{\thetab_2} {\argmin}  -\Qcal_{(1)}(\thetab_2) + \Pcal_{\lambda_{\Lb}}(\thetab_{2}) $, and $(iii)$ follows from the second property of  AECM algorithm's second cycle, which implies that $\Qcal_{(1)}(\thetab_2^{(t)}) + \Pcal_{\lambda_{\Lb}}(\thetab_2^{(t)}) = f(\thetab_1^{(t+1)}, \thetab_2^{(t)}, \thetab_3^{(t)})$.

We follow a similar sequence of arguments after swapping the subscripts $1$ and $2$ to obtain lower bound for $f(\thetab_1^{(t+1)}, \thetab_2^{(t+1)}, \thetab_3^{(t)})$ in \eqref{eq:cyc2-ineq} at the end for the AECM algorithm's third cycle. The objective at the end of the third cycle satisfies
  \begin{align}\label{eq:cyc3-ineq}
    f(\thetab_1^{(t+1)}, \thetab_2^{(t+1)}, \thetab_3^{(t+1)})
    &= -\Qcal_{(2)}(\thetab_3^{(t+1)}) + \Pcal_{\lambda_{\Lb}}(\thetab_{3}^{(t+1)}) + \nonumber\\
    & \quad \; f(\thetab_1^{(t+1)}, \thetab_2^{(t+1)}, \thetab_3^{(t+1)}) - \{-\Qcal_{(2)}(\thetab_3^{(t+1)}) + \Pcal_{\lambda_{\Lb}}(\thetab_{3}^{(t+1)}) \} \nonumber\\
    &\overset{(i)}{\leq}   -\Qcal_{(2)}(\thetab_3^{(t+1)}) + \Pcal_{\lambda_{\Lb}}(\thetab_{3}^{(t+1)}) \nonumber\\
    &\overset{(ii)}{\leq}  - \Qcal_{(2)}(\thetab_3^{(t)}) + \Pcal_{\lambda_{\Lb}}(\thetab_{3}^{(t)}) \nonumber\\
    &\overset{(iii)}{=} f(\thetab_1^{(t+1)}, \thetab_2^{(t+1)}, \thetab_3^{(t)}),
  \end{align}
where $(i), (ii),$ and $(iii)$ follow from the same arguments used in  \eqref{eq:cyc2-ineq}, except we swap $1$ and $2$. Finally, \eqref{eq:cyc1-ineq}, \eqref{eq:cyc2-ineq}, and \eqref{eq:cyc3-ineq} imply that
  \begin{align}
      f(\thetab_1^{(t+1)}, \thetab_2^{(t+1)}, \thetab_3^{(t+1)}) \leq f(\thetab_1^{(t+1)}, \thetab_2^{(t+1)}, \thetab_3^{(t)}) \leq
      f(\thetab_1^{(t+1)}, \thetab_2^{(t)}, \thetab_3^{(t)}) \leq f(\thetab_1^{(t)}, \thetab_2^{(t)}, \thetab_3^{(t)}),
  \end{align}
showing that the AECM algorithm does not increase the objective in
every iteration. The objective function $f(\thetab)$ is
bounded for every $\thetab \in \Thetab$ because  $\Thetab$ is compact;
therefore, $f(\thetab^{(t)})$ is a bounded non-increasing sequence, so
$f(\thetab^{(t)})$  converges to $f(\thetab^{(\infty)})$. Our
assumption implies that convergence happens only for stationary
points, so  $\thetab^{(\infty)}$ is a stationary point of
$\thetab^{(t)}$ sequence. The proposition is proved.
\end{proof}

\section{Derivation of the AECM Algorithm Updates}
\label{aecm-updates}

\subsection{Derivation of $\Bb^{(t+1)}$ and $\tau^{2(t+1)}$}

To estimate $\Bb^{(t+1)}$ and $\tau^{2(t+1)}$, we directly use the observed data
log likelihood, $\log f_{\yb | \thetab} (\yb)$, while conditioning on
$\Lb_1^{(t)}$ and $\Lb_2^{(t)}$ which amounts to solving a weighted least
squares; see Equation (7) in the main manuscript. Given $\Lb_1^{(t)}$ and $\Lb_2^{(t)}$,
the log likelhood as a function of $(\Bb, \tau^2)$ is
\begin{align*}
  \log f_{\yb | \thetab} (\yb) & = \sum_{i=1}^n \log f_{\yb_i | \thetab} (\yb_i) \\
  & = -\frac{N}{2}\log 2\pi - \frac{1}{2} \sum_{i=1}^n \text{logdet}(\Sigmab_{\yb_i}^{(t)}) - \frac{1}{2} \sum_{i=1}^n (\yb_i - \Xb_i \bb)^\top \Sigmab_{\yb_i}^{-1(t)} (\yb_i - \Xb_i \bb),
\end{align*}
where $\log f_{\yb_i | \thetab} (\yb_i)$ is the log likelihood sample $i$,
$\Sigmab_{\yb_i}^{(t)} = \tau^2 (\Zb_{i(1)} \Lb_{(1)}^{(t)}
\Lb_{(1)}^{(t)\top} \Zb_{i(1)}^\top + \Ib_{m_i})$, $\Zb_{i(1)}$ is the
matrix whose $j$th row is $\text{vec}(\Zb_{ij})^\top$, $\bb = \text{vec}(\Bb)$, and
$\Lb_{(1)}^{(t)} = \Lb_2^{(t)} \otimes \Lb_1^{(t)}$;  see Section 2.1 in the
main manuscript about the notation. Let $\Lambdab_i^{(t)} = \Zb_{i(1)}
\Lb_{(1)}^{(t)} \Lb_{(1)}^{(t)\top} \Zb_{i(1)}^\top + \Ib_{m_i}$,
$\Lambdab_i^{1/2(t)}$ be any matrix such that $\Lambdab_i^{1/2(t)}
\Lambdab_i^{1/2(t)\top} = \Lambdab_i^{(t)}$, and the scaled response and
predictor matrices be
    \begin{equation*}
      \breve{\yb} =
      \begin{bmatrix}
        \Lambdab_1^{-1/2(t)} \yb_1 \vspace{0.4mm} \\
        \Lambdab_2^{-1/2(t)} \yb_2 \vspace{0.4mm} \\
        \vdots \vspace{0.4mm} \\
        \Lambdab_n^{-1/2(t)} \yb_n
      \end{bmatrix}, \quad
      \breve{\Xb} =
      \begin{bmatrix}
        \Lambdab_1^{-1/2(t)} \Xb_1 \vspace{0.4mm} \\
        \Lambdab_2^{-1/2(t)} \Xb_2 \vspace{0.4mm} \\
        \vdots \vspace{0.4mm} \\
        \Lambdab_n^{-1/2(t)} \Xb_n
      \end{bmatrix}.
    \end{equation*}
Then, maximizing $\log f_{\yb | \thetab} (\yb)$ with respect to  $\bb$
while conditioning on $\Lb_1^{(t)}$ and $\Lb_2^{(t)}$ amounts to
minimizing the quadratic form $(\breve{\yb} - \breve{\Xb} \bb)^\top
(\breve{\yb} - \breve{\Xb} \bb)$ with respect to $\bb$. We estimate
$\Bb^{(t+1)}$ and $\tau^{2(t+1)}$ simultaneously by adding the scaled lasso
penalty terms to this quadratic form as in Equation (8) in the main manuscript
to enforce sparsity in $\Bb^{(t+1)}$ and reduce the bias in estimating
$\tau^{2(t+1)}$. We also update $\thetab^{(t)}$ to $\thetab^{(t+1/3)} =
(\Bb^{(t+1)}, \Lb_1^{(t)}, \Lb_2^{(t)}, \tau^{2(t+1)})$.

\subsection{Derivation of $\Lb_1^{(t+1)}$ and $\Lb_2^{(t+1)}$}

The AECM algorithm treats the matrices $\Cb_i$ $(i = 1, \ldots, n)$ as
missing data and considers the joint distributions of the observed and
missing data $(\yb_i, \cb_{i(k)})$, for $i = 1, \ldots, n$ and $k = 1,
2$; see Equations (9), (10), and (11) in the main manuscript. The AECM algorithm
then maximizes $\EE [\log f_{\yb | \cb, \thetab} (\yb) | \yb, \thetab^{(t+1/3)}]$ with respect to
$\Lb_1$, where $\log f_{\yb | \cb, \thetab} (\yb) | \yb, \thetab^{(t+1/3)}$ is
the complete data log likelihood in Equation of (6) of the main manuscript; see
also Section 3 and Equation (8) in the main manuscript.

Consider the estimation of $\Lb_1$ given $\thetab^{(t+1/3)}$ in the second cycle
of the AECM algorithm. The E step in the second cycle computes the conditional
expectation $\EE [\log f_{\yb | \cb, \thetab} (\yb)
| \yb, \thetab^{(t+1/3)}]$. The CM step  in the second cycle maximizes this objective with respect
to $\Lb_1$. This is equivalent to maximizing $\EE [\log f_{\tilde{\yb} | \cb, \thetab}
(\tilde{\yb}) | \tilde{\yb}, \thetab^{(t+1/3)}]$ with respect to
$\Lb_1$, where $\tilde{\yb} = \yb - \Xb \bb^{(t+1)}$.
The complete data log likelihood of $(\tilde{\yb}, \cb)$ is
\begin{align}\label{eq:supp-llk2}
 \log f_{\tilde{\yb} | \cb, \thetab} (\tilde{\yb}) = -\frac{N}{2} \log 2 \pi \tau^2 - \frac{1}{2 \tau^2}
    \sum_{i=1}^n (\tilde{\yb}_i - \Zb_{i(2)} \Lb_{(2)} \cb_{i(2)})^\top (\tilde{\yb}_i - \Zb_{i(2)} \Lb_{(2)} \cb_{i(2)}),
\end{align}
where, as defined in Section 2.1 of the main manuscript, $\Zb_{i(2)}$
is the matrix whose $j$th row is $\zb_{ij(2)} =
\text{vec}(\Zb_{ij(2)}) = \text{vec}(\Zb_{ij}^\top), \Lb_{(2)} = \Lb_1
\otimes \Lb_2$ and $\cb_{i(2)} = \text{vec}(\Cb_i^\top)$.

We derive the loss function in the E step of the second cycle. For $i = 1 \ldots n$, the joint
distribution of $(\tilde{\yb}_i, \cb_{i(2)})$  is
 \begin{equation*}
   \begin{bmatrix}
     \tilde{\yb}_i \vspace{1mm} \\
     \cb_{i(2)}
   \end{bmatrix} \sim N_{m_i + S_1 S_2} \left(
   \begin{bmatrix}
     \mathbf{0} \\
     \mathbf{0}
   \end{bmatrix}, \tau^2
   \begin{bmatrix}
     \Zb_{i(2)} \Lb_{(2)} \Lb_{(2)}^\top \Zb_{i(2)}^\top + \Ib_{m_i} &
     \Zb_{i(2)} \Lb_{(2)} \vspace{1mm} \\
     \Lb_{(2)}^\top \Zb_{i(2)}^\top &
     \Ib_{S_1 S_2}
    \end{bmatrix} \right).
 \end{equation*}
Using the analytic form of the conditional distribution of $\cb_{i(2)}$ given
$\tilde \yb_i$ in  \eqref{eq:supp-llk2}, gives the loss function $-\Qcal_{(1)}$
for estimating $\Lb_1$. Specifically, the negative of the conditional
expectation of the term in \eqref{eq:supp-llk2} that depends on $\Lb_{1}$ is
\begin{align*}
 - \Qcal_{(1)}(\Lb_1) = \EE \left[ \sum_{i=1}^n (\cb_{i(2)}^\top \Lb_{(2)}^\top \Zb_{i(2)}^\top \Zb_{i(2)} \Lb_{(2)} \cb_{i(2)} -
    2 \tilde{\yb}_i^\top \Zb_{i(2)} \Lb_{(2)} \cb_{i(2)}) | \tilde{\yb}, \thetab^{(t+1/3)} \right],
\end{align*}
and maximizing $\EE[\log f_{\tilde{\yb} | \cb, \thetab} (\tilde{\yb}) | \tilde{\yb}, \thetab^{(t+1/3)}]$ with respect to  $\Lb_1$ amounts to minimizing $-\Qcal_{(1)}(\Lb_1)$ with respect to  $\Lb_1$. Let
\begin{center}
  \begin{tabular}{rcrcl}
    $\mub_{i(2)}^{(t)}$ &
    $=$ &
    $\EE(\cb_{i(2)} | \tilde{\yb}_i, \thetab^{(t+1/3)})$ &
    $=$ &
    $\Lb_{(2)}^{(t)\top} \Zb_{i(2)}^\top (\Zb_{i(2)} \Lb_{(2)}^{(t)} \Lb_{(2)}^{(t)\top} \Zb_{i(2)}^\top + \Ib_{m_i})^{-1} \tilde{\yb}_{i}$, \vspace{1mm} \\
    $\Sigmab_{i(2)}^{(t)}$ &
    $=$ &
    $\text{Cov}(\cb_{i(2)} | \tilde{\yb}_i, \thetab^{(t+1/3)})$ &
    $=$ &
    $\tau^{2(t+1)} (\Ib_{S_1 S_2} + \Lb_{(2)}^{(t)\top} \Zb_{i(2)}^\top \Zb_{i(2)} \Lb_{(2)}^{(t)})^{-1}$, \vspace{1mm} \\
    $\Gammab_{i(2)}^{(t)}$ &
    $=$ &
    $\EE(\cb_{i(2)} \cb_{i(2)}^\top| \tilde{\yb}_i, \thetab^{(t+1/3)})$ &
    $=$ &
    $\Sigmab_{i(2)}^{(t)} + \mub_{i(2)}^{(t)} \mub_{i(2)}^{(t)\top}$.
  \end{tabular}
\end{center}
Then, we can reexpress $-\Qcal_{(1)}(\Lb_1)$ as
\begin{flalign*}
 -\Qcal_{(1)}(\Lb_1) &=
  \EE \left \{ \sum_{i=1}^n \left[ \text{tr}(\Lb_{(2)}^\top \Zb_{i(2)}^\top \Zb_{i(2)} \Lb_{(2)} \cb_{i(2)} \cb_{i(2)}^\top) - 2 \tilde{\yb}_i^\top \Zb_{i(2)} \Lb_{(2)} \cb_{i(2)} \right] | \tilde{\yb}, \thetab^{(t+1/3)} \right \} \\
  &=
  \sum_{i=1}^n \left[ \text{tr}(\Lb_{(2)}^\top \Zb_{i(2)}^\top \Zb_{i(2)} \Lb_{(2)} \Gammab_{i(2)}^{(t)}) - 2 \tilde{\yb}_i^\top \Zb_{i(2)} \Lb_{(2)} \mub_{i(2)}^{(t)} \right] \\
  &=
  \sum_{i=1}^n \left[ \text{tr}(\Zb_{i(2)} \Lb_{(2)} \Gammab_{i(2)}^{(t)} \Lb_{(2)}^\top \Zb_{i(2)}^\top) - 2 \mub_{i(2)}^{(t)\top} \Lb_{(2)}^\top \Zb_{i(2)}^\top \tilde{\yb}_i \right].
  \refstepcounter{equation}\tag{\theequation} \label{eq:Q1}
\end{flalign*}
We simplify this expression by reexpressing the first component of this sum,
$\text{tr}(\Zb_{i(2)} \Lb_{(2)} \Gammab_{i(2)}^{(t)} \Lb_{(2)}^\top
\Zb_{i(2)}^\top)$. If $\lb_1 = \text{vec}(\Lb_1)$, then
\begin{align*}
  \Lb_{(2)}^\top \Zb_{i(2)}^\top & = \Lb_{(2)}^\top
    \begin{bmatrix}
      \zb_{i1(2)} & \zb_{i2(2)} & \ldots & \zb_{im_i(2)}
    \end{bmatrix} \\
  & =
    \begin{bmatrix}
      (\Lb_1^\top \otimes \Lb_2^\top) \zb_{i1(2)} & (\Lb_1^\top \otimes \Lb_2^\top) \zb_{i2(2)} & \ldots & (\Lb_1^\top \otimes \Lb_2^\top) \zb_{im_i(2)}
    \end{bmatrix} \\
  & =
    \begin{bmatrix}
      \text{vec}(\Lb_2^\top \Zb_{i1(2)} \Lb_1) & \text{vec}(\Lb_2^\top \Zb_{i2(2)} \Lb_1) & \ldots & \text{vec}(\Lb_2^\top \Zb_{im_i(2)} \Lb_1)
    \end{bmatrix} \\
  & =
    \begin{bmatrix}
      \text{vec}(\Lb_2^\top \Zb_{i1(1)}^\top \Lb_1) & \text{vec}(\Lb_2^\top \Zb_{i2(1)}^\top \Lb_1) & \ldots & \text{vec}(\Lb_2^\top \Zb_{im_i(1)}^\top \Lb_1)
    \end{bmatrix},\\
  \text{tr}(\Zb_{i(2)} \Lb_{(2)} \Gammab_{i(2)}^{(t)} \Lb_{(2)}^\top \Zb_{i(2)}^\top) & =
    \sum_{j=1}^{m_i} \text{vec}(\Lb_2^\top \Zb_{ij(1)}^\top \Lb_1)^\top \Gammab_{i(2)}^{(t)} \text{vec}(\Lb_2^\top \Zb_{ij(1)}^\top \Lb_1) \\
    & = \sum_{j=1}^{m_i} \text{vec}(\Lb_1)^\top (\Ib_{S_1} \otimes \Zb_{ij(1)} \Lb_{2}) \Gammab_{i(2)}^{(t)} (\Ib_{S_1} \otimes \Lb_{2}^\top \Zb_{ij(1)}^\top) \text{vec}(\Lb_1) \\
    & = \lb_1^\top \left [ \sum_{j=1}^{m_i} (\Ib_{S_1} \otimes \Zb_{ij(1)} \Lb_{2}) \Gammab_{i(2)}^{(t)} (\Ib_{S_1} \otimes \Lb_{2}^\top \Zb_{ij(1)}^\top) \right ] \lb_1.
\end{align*}
Next, we pull $\Lb_1$ out of the second component of \eqref{eq:Q1},
$\mub_{i(2)}^{(t)\top} \Lb_{(2)}^\top \Zb_{i(2)}^\top \tilde{\yb}_i$
to obtain that
\begin{align*}
  \Zb_{i(2)}^\top \tilde{\yb}_i & = \sum_{j=1}^{m_i} \tilde{y}_{ij} \zb_{ij(2)}, \\
  \Lb_{(2)}^\top \Zb_{i(2)}^\top \tilde{\yb}_i & = (\Lb_1^\top \otimes \Lb_2^\top) \sum_{j=1}^{m_i} \tilde{y}_{ij} \zb_{ij(2)} \\
  & = \text{vec} \left[ \Lb_2^\top (\sum_{j=1}^{m_i} \tilde{y}_{ij} \Zb_{ij(2)}) \Lb_1 \right] \\
  & = \text{vec} \left[ \Lb_2^\top (\sum_{j=1}^{m_i} \tilde{y}_{ij} \Zb_{ij(1)}^\top) \Lb_1 \right] \\
  & = \left[ \Ib_{S_1} \otimes \Lb_{2}^\top (\sum_{j=1}^{m_i} \tilde{y}_{ij} \Zb_{ij(1)}^\top) \right] \text{vec}(\Lb_1), \\
  \mub_{i(2)}^{(t)\top} \Lb_{(2)}^\top \Zb_{i(2)}^\top \tilde{\yb}_i & =
    \mub_{i(2)}^{(t)\top} \left[ \Ib_{S_1} \otimes \Lb_{2}^\top (\sum_{j=1}^{m_i} \tilde{y}_{ij} \Zb_{ij(1)}^\top) \right] \lb_1.
\end{align*}
When estimating $\Lb_1$, we condition on $\Lb_{2}^{(t)}$, thus we replace $\Lb_{2}$ with $\Lb_{2}^{(t)}$ in our expression for $-\Qcal_{(1)}(\Lb_1)$. Let
\begin{align*}
  \Hb_{(1)}^{(t)} & = \sum_{i=1}^n \sum_{j=1}^{m_i} (\Ib_{S_1} \otimes \Zb_{ij(1)} \Lb_{2}^{(t)}) \Gammab_{i(2)}^{(t)} (\Ib_{S_1} \otimes \Lb_2^{(t)\top} \Zb_{ij(1)}^\top), \\
  \gb_{(1)}^{(t)} & = \sum_{i=1}^n \left[ \Ib_{S_1} \otimes (\sum_{j=1}^{m_i} \tilde{y}_{ij} \Zb_{ij(1)}) \Lb_{2}^{(t)} \right] \mub_{i(2)}^{(t)},
\end{align*}
then $-\Qcal_{(1)}(\Lb_1) = \lb_1^\top \Hb_{(1)}^{(t)} \lb_1 - 2
\gb_{(1)}^{(t)\top} \lb_1$, which is expressed as the quadratic form
\begin{align}
  \label{eq:L1_quad_form}
-\Qcal_{(1)}(\Lb_1) \propto  (\Hb_{(1)}^{-1/2(t)} \gb_{(1)}^{(t)} - \Hb_{(1)}^{1/2(t)\top} \lb_1)^\top (\Hb_{(1)}^{-1/2(t)} \gb_{(1)}^{(t)} - \Hb_{(1)}^{1/2(t)\top} \lb_1),
\end{align}
where $\Hb_{(1)}^{1/2(t)}$ is any matrix such that $\Hb_{(1)}^{1/2(t)} \Hb_{(1)}^{1/2(t)\top} = \Hb_{(1)}^{(t)}$ and $\Hb_{(1)}^{-1/2(t)}$ is any generalized inverse of $\Hb_{(1)}^{1/2(t)}$.
It can be shown that $\gb_{(1)}^{(t)}$ is in the column space of
$\Hb_{(1)}^{(t)}$, and thus minimizing $-\Qcal_{(1)}(\Lb_1)$ with respect to
$\lb_1$ is equivalent to minimizing \eqref{eq:L1_quad_form} with
respect to  $\lb_1$. This statement remains true even when $\Hb_{(1)}^{(t)}$ is
a singular matrix. We further add to \eqref{eq:L1_quad_form} a group
lasso penalty as in Equation (10) in the main manuscript to enforce rank constraints
on our estimate of $\Lb_1$.  This completes the derivation of the
objective in Equation (10) of the main manuscript.

The objective for estimation $\Lb_2^{(t+1)}$ in Equation (11) of the main manuscript is
obtained by swapping indices 1 and 2 in the derivation above for $\Lb_1$ updates.

\section{MMTR Misspecification Plots}
\label{mmtr_misspec_plots}

We present additional plots for each of the simulation scenarios described in
Section 4.2 showing how the relative errors in estimating $\Lambdab$ by MMTR
change depending on the GEE's equicorrelation model parameter $\alpha$ (Figures
\ref{fig:misspec_fixed_scales} and \ref{fig:misspec_diff_scales}). The
$\alpha$ term in the GEE's equicorrelation model determines the correlation
between all pairs of responses within each group of observations. The greater
the $\alpha$ value, the more misspecified the model is to the MMTR setting, which
assumes the covariance between pairs of responses in each group to be a function
of that group's random-effects covariates. These plots demonstrate that the more
misspecified the correlation structure is from a random-effects setting, the
worse MMTR will perform in estimating the true covariance matrix.

\begin{figure}[H]
\includegraphics[width=\textwidth]{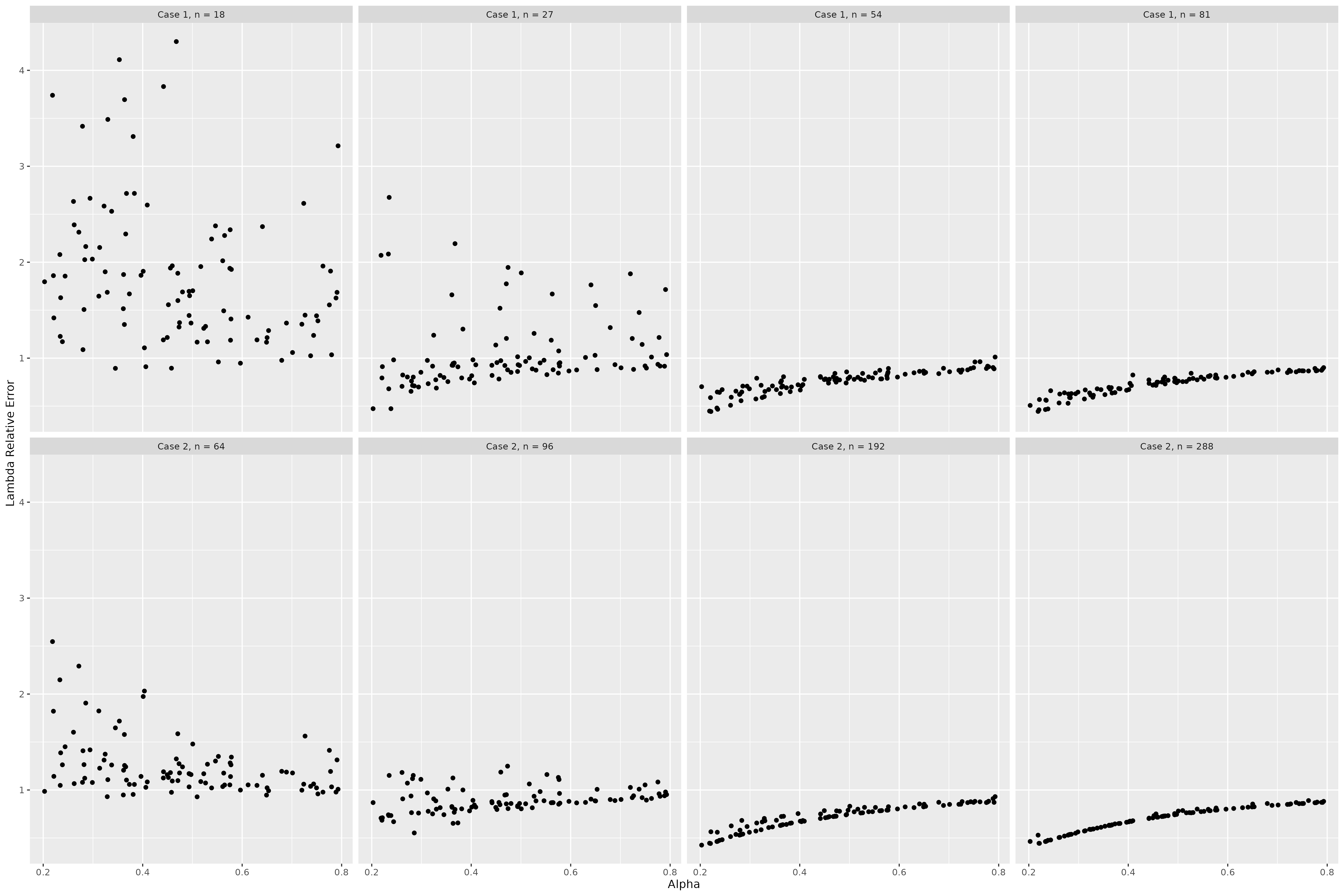}
\caption{Relative errors in estimating $\Lambdab$ by MMTR against $\alpha$. All plots here have the same range on the y axis.}
\label{fig:misspec_fixed_scales}
\end{figure}

\begin{figure}[H]
\includegraphics[width=\textwidth]{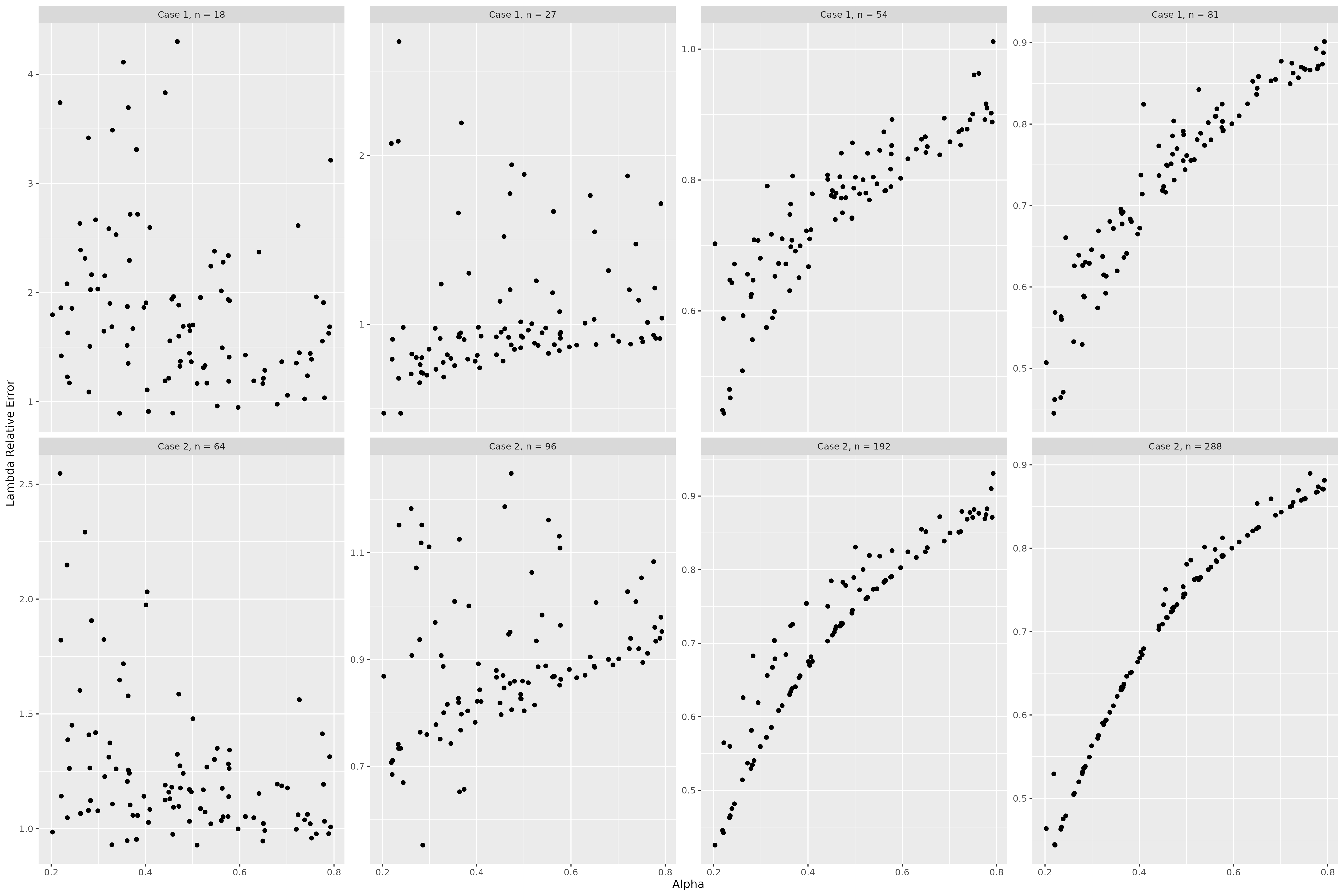}
\caption{Relative errors in estimating $\Lambdab$ by MMTR against $\alpha$. These plots are zoomed in from Figure \ref{fig:misspec_fixed_scales}.}
\label{fig:misspec_diff_scales}
\end{figure}

\end{document}